\newtheorem{thm}{Theorem}
\newtheorem{cor}[thm]{Corollary}
\newtheorem{lemma}[thm]{Lemma}
\newtheorem{prop}[thm]{Proposition}
\newtheorem{prob}[thm]{Problem}
\newcommand{\bmat}{\left[ \begin{matrix}}
\newcommand{\emat}{\end{matrix} \right]}
\newcommand{\script}[1]{\EuScript{#1}}
\newcommand{\Tr}{\operatorname{\!{^{\mbox{\scriptsize\sf T}}}}\!}
\newcommand{\Circ}{\mathop{\rm Circ}}
\newcommand{\Span}{\mathop{\rm Span}}
\newcommand{\Frac}[2]{{\displaystyle\frac{#1}{#2}}}
\newcommand{\E}{{\mathbb E}}
\newcommand{\Rbb}{\mathbb R}
\newcommand{\Cbb}{\mathbb C}
\newcommand{\Zbb}{\mathbb Z}
\newcommand{\Tbb}{\mathbb T}
\newcommand{\xb}{\mathbf  x}
\newcommand{\yb}{\mathbf  y}
\newcommand{\zb}{\mathbf  z}
\newcommand{\wb}{\mathbf  w}
\newcommand{\vb}{\mathbf  v}
\newcommand{\eb}{\mathbf  e}
\newcommand{\gb}{\mathbf  g}
\newcommand{\hb}{\mathbf  h}
\newcommand{\ab}{\mathbf a}
\newcommand{\bb}{\mathbf  b}
\newcommand{\pb}{\mathbf  p}
 \newcommand{\cb}{\mathbf c}
\newcommand{\qb}{\mathbf q}  
\newcommand{\Ab}{\mathbf A}
\newcommand{\Bb}{\mathbf B}
\newcommand{\Cb}{\mathbf C}
\newcommand{\Ib}{\mathbf I}
\newcommand{\Mb}{\mathbf M}
\newcommand{\Pb}{\mathbf P}
\newcommand{\Qb}{\mathbf Q}
\newcommand{\Rb}{\mathbf R}
\newcommand{\Sb}{\mathbf S}
\newcommand{\Tb}{\mathbf T}
\newcommand{\Vb}{\mathbf V}
\newcommand{\Wb}{\mathbf W}
\newcommand{\gammab}{\boldsymbol{\gamma}}
\newcommand{\Sigmab}{\boldsymbol{\Sigma}}
\begin{document}
\title{A New Algorithm for  Circulant Rational Covariance Extension and Applications to Finite-interval Smoothing
\thanks{This research was supported by the China Scholarship Program (CSC).} 
}
\author{Giorgio Picci,~\IEEEmembership{Life~Fellow,~IEEE} and
Bin Zhu
\thanks{G. Picci  and Bin Zhu are with the
Department of Information Engineering, University of Padova, via
Gradenigo 6/B, 35131 Padova, Italy; e-mail:  
{\tt picci@dei.unipd.it}, {\tt niuniuneverdie@gmail.com}} 
}
\markboth{}
{Circulant Covariance Extension}
\maketitle

\begin{abstract}
The partial stochastic realization of periodic processes from finite covariance data has    recently been solved  by  Lindquist and Picci based on convex optimization of a generalized  entropy functional.  The meaning  and the role of this criterion   have an unclear origin.    In this paper  we propose a  solution based on a nonlinear generalization of the classical Yule-Walker type equations and on a new iterative algorithm which  is shown to converge to the same (unique) solution of the variational problem. This provides a conceptual link to the variational principles and at the same time yields a robust algorithm which can for example be    successfully applied to finite-interval smoothing problems  providing a simpler procedure if compared with the classical Riccati-based calculations. 
 \end{abstract}
 
 
 \section{Introduction}
 
 The rational covariance extension   is an important problem with applications in signal processing, identification and   control which has been extensively discussed in the literature; see, e.g., \cite{Kalman,Gthesis,Georgiou1,BLGM1,BGuL,SIGEST,Byrnes-L-97,BEL1,BEL2,PEthesis,Pavon-F-12,Carli-FPP} and references therein. It is usually formulated and discussed for stationary processes defined on the whole integer line.  In practice, however  data are often only available on a finite time interval.   Modeling of finite-interval processes on the other hand, can not be based on a trivial periodic extension as for deterministic signals but requires instead   a  matrix completion of Toeplitz covariance  matrices to  a {\em circulant structure}  \cite{Carli-FPP}.  Covariance extension for finite-interval  stochastic processes, then leads to  circulant matrix completion problems and to partial stochastic realizations in the form of bilateral periodic ARMA models. The problem has been solved  in \cite{LPcirculant}  for scalar processes   and  in \cite{LMPcirculant} for a certain class of vector processes admitting   a matrix fraction representation with a scalar numerator polynomial.

 As pointed out in \cite{LPcirculant} the circulant rational covariance extension theory provides a fast approximation procedure for solving the classical rational covariance extension problem on the integer line and thereby provides a reliable and  fast numerical procedure to do stationary model approximation on a finite interval, as it is based on finite Fourier transform (DFT) and use of the fast Fourier transforms (FFT). This for example leads to approximate solutions of the finite interval smoothing problem in a stationary setting which avoids Riccati equations altogether and also avoids the well-known  transient phenomena at the endpoints of the interval as it subsumes stationarity also on the finite interval. In the present paper we shall provide  evidence that this  also holds in the multivariable case.

 In \cite{Carli-FPP}  Carli, Ferrante, Pavon and Picci presented a maximum-entropy approach to the circulant covariance extension problem, thereby providing a procedure for determining the unique bilateral periodic AR model matching a given partial  covariance sequence. Generalized entropy maximization has however been the basis for  much earlier work on stationary covariance extension on the integer line. The first complete solution of this problem has actually been obtained by   recasting it  in the context of the optimization-based theory of moment problems with rational measures developed in \cite{BGuL,SIGEST,BGL1,BEL2}, \cite{BLmoments,BLkrein,Georgiou3,Georgiou-L-03}. If we except the general philosophical introduction by Dempster \cite{Dempster-72}, the significance and the actual necessity of this variational approach  has so far been an elusive question. 
 
 In this paper we shall attempt to reformulate the covariance extension problem in the more familiar setting  of Yule-Walker type covariance equations. Although the resulting equations turn out to be nonlinear, a very natural iterative solution is apparent from their structure and this iterative solution provides an unexpected link with the variational solution.

The outline of the paper is as follows. 

In Section~\ref{SecInteger} we review the covariance extension problem on the integer line formulated as maximization of a generalized entropy functional, discuss representation by unilateral ARMA processes and  propose an algorithm for solving iteratively the nonlinear Yule-Walker type equation for the denominator parameters of the model. A first statement of the interpretation of this iterative algorithm as a quasi-Newton  solution of the generalized entropy maximization is provided.

In Section~\ref{covextsec} the covariance extension problem for periodic processes is formulated first in terms of the spectral density on the discrete unit circle and then in terms of periodic ARMA models. A generalization of the nonlinear Yule-Walker  equations and an iterative algorithm for their solution are introduced. It is shown (Theorem~\ref{propos_quasiNewton})  that the algorithm is a quasi-Newton  procedure to minimize the dual of the generalized entropy functional expressed in terms of the circulant ARMA polynomials.

Section~\ref{Proof} contains  an explicit layout of the algorithm and the proof that the algorithm actually converges to a minimum. Some subtleties about discrete spectral factorization are also discussed.

A generalization of the results to the vector case is briefly presented in Section~\ref{SecVector}. Finally an application of the theory to   the finite-interval smoothing problem is presented in Section~\ref{Smoothing}. The paper also contains an appendix with some background material on circulant matrices and spectral factorization on the discrete unit circle.

\subsection*{Symbols}
Throughout this paper $\E$ denotes mathematical expectation and $\Zbb$ the set of integer numbers.  Boldface symbols are used for circulant matrices and related quantities. The symbol  $ \overline{\Span}$    stands for closed linear hull in the standard Hilbert space of second order random variables, see e.g. \cite{LPBook}. In our setting, to match Fourier-domain notations,     polynomials will be written  as functions of the  indeterminate $z^{-1}$.  In particular, the definition  of {\em Schur polynomials}, which   normally can only have zeros inside the complex unit disk, $\{|z|<1\}$ has been slightly modified  to accomodate this convention; see Sec.\ref{SecInteger}.  The set of real Schur polynomials of degree $n$ will be denoted by the symbol $\mathcal{S}_n$.

\section{The  rational covariance extension problem on  the integer line.}\label{SecInteger}
For future reference we first  state   the well-known extension  problem for a scalar covariance sequence. Let $\{y(t)\}$ be a  real zero-mean (wide-sense) stationary process on $\Zbb$  whose first $n+1$ covariance lags  $c_k:= \E\{y(t+k) y(t)\}\,;\, k=0,1,\ldots n$ are  arranged in a symmetric  Toeplitz matrix
\begin{equation}
\label{Toeplitz}
\Tb_n=\begin{bmatrix} c_0&c_1&c_2&\cdots&c_n\\
				c_1&c_0&c_1&\cdots& c_{n-1}\\
				c_2&c_1&c_0&\cdots&c_{n-2}\\
				\vdots&\vdots&\vdots&\ddots&\vdots\\
				c_n&c_{n-1}&c_{n-2}&\cdots&c_0
 		\end{bmatrix}\,, \qquad n\in \Zbb_+\,.
\end{equation}
 Following \cite{BLGM1,BGuL},  consider the following  version of the classical covariance completion problem stated in terms of rational spectral densities.

\begin{prob}\label{RCEP}
	Suppose we are given a symmetric pseudo-polynomial
	\[P(z)=\sum_{k=-n}^{n}p_kz^{-k},\quad p_{-k}=p_k,\]
	which takes positive values on the unit circle; i.e. $P(e^{i\theta})>0 \,;\, \theta \in [-\pi,\,\pi]$ and $n+1$ real numbers $c_0,c_1,\dots,c_n$ such that the Toeplitz matrix $\mathbf{T}_n$ is positive definite. We want to determine a pseudo-polynomial 
	\[Q(z)=\sum_{k=-n}^{n}q_kz^{-k},\quad q_{-k}=q_k\] such that  $\Phi(e^{i\theta}):=P(e^{i\theta})/Q(e^{i\theta})$ is a  spectral density satisfying the moment conditions
	\begin{equation}\label{moment_cond}
	\int_{-\pi}^{\pi}e^{ik\theta}\Phi(e^{i\theta})\frac{d\theta}{2\pi}=c_k,\quad k=0,1,\dots,n.
	\end{equation}
\end{prob}

There is  an alternative formulation of this problem  in terms of ARMA models. Let $a(z),b(z)$ be a pair of polynomials of degree $n$ in the indeterminate $z^{-1}$, interpreted as the unit delay operator on real sequences,
\begin{equation}\label{def_poly_ab}
a(z):=\sum_{k=0}^{n}a_kz^{-k},\quad b(z):=\sum_{k=0}^{n}b_kz^{-k}.
\end{equation}
and consider the  process  $y$ defined on the whole integer line, described by the  ARMA model 
\begin{equation}\label{ARMA}
\sum_{k=0}^{n}a_ky(t-k)=\sum_{k=0}^{n}b_kw(t-k),\quad t\in\mathbb{Z},
\end{equation}
or,   symbolically as $ a(z)y(t)=b(z)w(t) $ where $\{a_k,b_k\}$ are the  coefficients in \eqref{def_poly_ab}, and $w$ is a white noise process of variance $\mathbb{E}[w(t)]^2=\sigma^2$. To guarantee uniqueness of the representation we shall hereafter take $z^na(z)$ and $z^nb(z)$ to be Schur polynomials\footnote{Herefter a polynomial of degree $n$, $p(z)$, in the indeterminate $z^{-1}$ will be called {\em Schur} if  all  zeros of $z^np(z)$ are inside the unit circle.} and normalize  $b(z)$ to be  monic (i.e. such that $b_0=1$).  Consider then the following problem:

\begin{prob}\label{ARMA_ident_ordy}
	Suppose we are given a monic Schur polynomial $b(z)$ of degree $n$ and $n+1$ numbers $c_0,c_1,\dots,c_n$ such that the Toeplitz matrix $\mathbf{T}_n$ is positive definite. We want to determine the coefficients of the polynomial $a(z)$ such that the the first $n+1$ covariance lags of the ARMA process defined by (\ref{ARMA}) are  equal to $c_0,c_1,\dots,c_n$.
\end{prob}
 This problem has been discussed and solved by variational techniques similar  to  those used for  Problem \ref{RCEP}, by P. Enquist   \cite{PEthesis,En,PE}.
It is easily seen that any solution to Problem \ref{ARMA_ident_ordy}   solves Problem \ref{RCEP} yielding  a spectral density
\begin{equation}\label{spectFactinz}
\Phi(e^{i\theta}):=\sigma^2\left|\frac{b(e^{i\theta})}{a(e^{i\theta})}\right|^2=\sigma^2\left.\frac{b(z)b(z^{-1})}{a(z)a(z^{-1})}\right|_{z=e^{i\theta}},
\end{equation}
with the denominator $Q(z):=a(z^{-1})a(z)/\sigma^2$. Conversely, given a pair of pseudo-polynomials $P(z),\,Q(z)$ solving Problem  \ref{RCEP},  and their outer polynomial spectral factors $b(z),\,a(z)$, i.e.
\begin{equation} 
P(z)= b(z)b(z^{-1})\,,\qquad Q(z)= a(z)a(z^{-1})
\end{equation}
 any solution of Problem  \ref{RCEP} yields a solution of Problem   \ref{ARMA_ident_ordy} where the noise $w(t)$ in the ARMA model \eqref{ARMA}  has unit variance. 
Our initial goal will be to formulate the estimation of the  $\{a_k\}$ parameters  in the more familiar frame of Yule-Walker-type equations. This approach will turn out to provide an interesting link with the mainstream procedure in the literature based on  generalized entropy minimization \cite{BGuL,SIGEST}.

\subsection*{Characterization of the AR coefficients}
 
Being defined in terms of  Schur polynomials, the ARMA model \eqref{ARMA} is a causal and causally invertible innovation model. This means that (as mentioned e.g. in \cite{SS}), 
\begin{itemize}
	\item 
	The process $\{y(t)\}$ has a  representation 
	\begin{equation}\label{MA_infty}
	y(t)=\sum_{k=0}^{\infty}\gamma_k\, w(t-k)
	\end{equation}
	which is convergent in mean  square  (causality).
	\item 
	Conversely, the process $\{w(t)\}$ can be represented in terms of $\{y(t)\}$
	\[w(t)=\sum_{k=0}^{\infty}\pi_k\, y(t-k)\]
	the sum also converging in mean square  (invertibility).	
	\item 
	The innovation property: $\overline{\Span}\{y(k),k\leq t\}= \overline{\Span}\{w(k),k\leq t\} $ holds for all $t\in \Zbb$ so that in particular,
	\begin{equation}\label{causality}
	w(t)\perp \overline{\Span}\{y(k),k\leq t-1\}.
	\end{equation}
\end{itemize}

Consider the representation (\ref{MA_infty}) and let  $\gamma(z):=\sum_{k=0}^{\infty}\gamma_k\, z^{-k}$; then it is clear that
\[
\gamma(z)=\frac{b(z)}{a(z)}\implies b(z)=\gamma(z)a(z)\,,
\]
so that, by matching the coefficients of polynomials  one can see that for $j>n$,  the sequence $\{\gamma_j\}$ satisfies
a homogeneous difference equation
\[a_0\gamma_j+a_1\gamma_{j-1}+\dots+a_n\gamma_{j-n}=0\]
with the initial conditions specified by
\begin{equation}\label{IC_MAinfty_para}
\left[\begin{array}{ccccc}
a_0 & 0 & 0 & \dots & 0 \\
a_1 & a_0 & 0 & \dots & 0 \\
a_2 & a_1 & a_0 & \dots & 0 \\
\vdots & & \ddots & \ddots & \vdots \\
a_n & \dots & a_2 & a_1 & a_0 \\
\end{array}\right]
\left[\begin{array}{c}
\gamma_0 \\
\gamma_1 \\
\gamma_2 \\
\vdots \\
\gamma_n
\end{array}\right]
=\left[\begin{array}{c}
b_0 \\
b_1 \\
b_2 \\
\vdots \\
b_n
\end{array}\right],
\end{equation}
 Denoting the $n\!+\!1$-dimensional vectors  of coefficient $\{a_k,b_k\,;\, k=0,1,\ldots,n\}$ by $\ab$ and $\bb$ respectively, equation \eqref{IC_MAinfty_para} can be rewritten in vector form as
\begin{equation}
\mathbf{T}(\mathbf{a})\gammab=\mathbf{b} 
\end{equation}
where  $\gammab$ is the vector with components $\{\gamma_j,j=0,\dots,n\}$, which yields
\begin{equation}\label{gamma}
 \gammab(\mathbf{a})=\mathbf{T}(\mathbf{a})^{-1}\mathbf{b}.
\end{equation}
In general, to specify completely the model \eqref{ARMA} we need to specify  the noise variance parameter $\sigma^2$. For an innovation model, it may seem natural to  fix  $w(t)$ equal to the one-step prediction error $e(t)= y(t)-\hat{y}(t\mid t-1)$ in which case however  the impulse response $\{\gamma_k\}$ should be  normalized so  that $\gamma_0=1$. Since by convention we choose monic numerator polynomials,
($b_0=1$)  \eqref{IC_MAinfty_para} implies that  $\gamma_0 =1/a_0$ and  hence    one should constrain also $a_0$ to be  normalized to have modulus one (more precisely $1/a_0^2=1$). Adding  this extra constraint   would however be quite  inconvenient since the vector $\ab$ is determined either   as a result of an optimization problem or   updated by an iterative algorithm. A natural solution is to let  the noise variance  vary as a function of 	$\ab$ without imposing any normalization to $a_0$. In fact one can normalize to obtain $a_0=1$   once the vector $\ab$ is determined and makes up  the  coefficients of a Schur polynomial  in which case the innovation variance would simply be $ \sigma_{e}^2= \sigma^2/a_0^2$, where $\sigma^2$ is the variance parameter without  normalization which can be computed by the formula \eqref{sigma_sqr} shown below.

One can similarly derive an analogous difference equation for the covariance lags of the process $\{y(t)\}$ 
\[a_0c_j+a_1c_{j-1}+\dots+a_nc_{j-n}=0\]
valid for $j>n$, with different initial conditions which can be written in matrix form as 
\begin{equation}\label{charac_a}
\mathbf{T}_n\mathbf{a}=\sigma^2\mathbf{H}_b\gammab,
\end{equation}
where
\[\mathbf{H}_b:=\left[\begin{array}{ccccc}
	b_0 & b_1 & b_2 & \dots & b_n \\
	b_1 & b_2 & \dots & b_n & 0 \\
	b_2 & \dots & b_n & 0 & 0 \\
	\vdots & \iddots & \iddots &  & \vdots \\
	b_n & 0 & \dots & 0 & 0 \\
	\end{array}\right].\]
Combining \eqref{gamma} and \eqref{charac_a} to eliminate $\gammab$ 	one obtains a nonlinear equation for the parameter $\ab$ as a function of the covariance lags (in $\mathbf{T}_n$) and the known numerator parameters $\bb$,
\begin{equation}\label{equat_a}
\mathbf{a}=\sigma^2\mathbf{T}_n^{-1}\mathbf{H}_b\mathbf{T}(\mathbf{a})^{-1}\mathbf{b}.
\end{equation}
In this equation the variance parameter $\sigma^2$ can also be expressed in function of the parameter $\ab$ as	
\begin{equation}\label{sigma_sqr}
\sigma^2(\mathbf{a})=\frac{a_0}{b_n}\sum_{k=0}^{n}a_kc_{n-k},
\end{equation}
as obtained  from the last equation in (\ref{charac_a}).

\subsection*{An iterative  solution  of the rational Covariance Extension problem}
Since there is no explicit solution of equation \eqref{equat_a} in view,   we propose to solve it by an iterative scheme of the form
\begin{equation}\label{fixed-pt_iter_ordy}
\mathbf{a}^{(k+1)}=\sigma^2\mathbf{T}_n^{-1}\mathbf{H}_b\mathbf{T}(\mathbf{a}^{(k)})^{-1}\mathbf{b},\qquad k=0,1,\ldots
\end{equation}
which can for example be initialized with the coefficients of the maximum entropy solution    $\mathbf{a}^{(0)}=\mathbf{a}_{\mathrm{ME}}$ computed by the Levinson algorithm with data  the Toeplitz matrix $\mathbf{T}_n$. The scaling factor $\sigma^2$ is updated in each iteration by substituting  the iterate $\mathbf{a}^{(k)}$ in \eqref{sigma_sqr}.

Proving convergence of this iteration is a non-trivial matter. However Anders Lindquist has suggested us a slick proof  which makes contact with  the generalized entropy minimization approach to  Problem \ref{ARMA_ident_ordy}. The  following theorem  which we state here without proof is based on  his idea. The proof here is omitted since it will turn out to be  a particularization   of the  proof of Theorem \ref{periodThm}   given in  Sec. \ref{Proof} for the periodic Problem~\ref{ARMA_identify}.
\begin{thm}\label{propos_quasiNewton_ordy}
	The  iteration scheme $(\ref{fixed-pt_iter_ordy})$ can be interpreted as a quasi-Netwon step with a scaling parameter $\sigma^2$ for the minimization of the function
	\begin{equation}\label{func_a_ordy}
	\mathbb{J}_P(\mathbf{a})=\mathbf{a}^\top\mathbf{T}_n\mathbf{a}-\int_{-\pi}^{\pi}b(e^{-i\theta})b(e^{i\theta})\log [a(e^{-i\theta})a(e^{i\theta})]\frac{d\theta}{2\pi}.
	\end{equation}
	which has a unique minimum in the set $\mathcal{S}_n$ of real   Schur polynomials of degree $n$ with $a_0>1$.
	 \end{thm}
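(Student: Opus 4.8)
\noindent\emph{Sketch.} The argument has two essentially independent parts: the variational/quasi-Newton reading of the iteration \eqref{fixed-pt_iter_ordy}, and the existence and uniqueness of the minimizer of $\mathbb{J}_P$; the latter will be reduced to the known solvability of the moment problem of Problem~\ref{RCEP}. The first part hinges on computing $\nabla\mathbb{J}_P$. The quadratic term in \eqref{func_a_ordy} contributes $2\mathbf{T}_n\mathbf{a}$; for the integral term one uses $\partial_{a_j}\log a(e^{i\theta})=e^{-ij\theta}/a(e^{i\theta})$ together with the symmetry $\theta\mapsto-\theta$ to obtain
\[
\frac{\partial}{\partial a_j}\int_{-\pi}^{\pi}b(e^{-i\theta})b(e^{i\theta})\log[a(e^{-i\theta})a(e^{i\theta})]\,\frac{d\theta}{2\pi}=2\int_{-\pi}^{\pi}\frac{b(e^{-i\theta})\,b(e^{i\theta})}{a(e^{i\theta})}\,e^{-ij\theta}\,\frac{d\theta}{2\pi}.
\]
Since $z^{n}a(z)$ is Schur, $\gamma(z):=b(z)/a(z)=\sum_{k\ge0}\gamma_k z^{-k}$ is analytic on $\{|z|\ge1\}$ and its truncation is $\gammab=\mathbf{T}(\mathbf{a})^{-1}\mathbf{b}$ by \eqref{gamma}; expanding $b(e^{-i\theta})\gamma(e^{i\theta})$ in Fourier series and reading off the coefficient of $e^{ij\theta}$ for $j=0,\dots,n$ one recognizes exactly the entries of $\mathbf{H}_b\gammab$, whence
\[
\tfrac12\,\nabla\mathbb{J}_P(\mathbf{a})=\mathbf{T}_n\mathbf{a}-\mathbf{H}_b\,\mathbf{T}(\mathbf{a})^{-1}\mathbf{b}.
\]
Comparing the last row of this identity with the last row of \eqref{charac_a} shows in addition that at a critical point of $\mathbb{J}_P$ the parameter $\sigma^2(\mathbf{a})$ of \eqref{sigma_sqr} is forced to equal $1$.

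The quasi-Newton structure is then apparent. The exact Hessian of $\mathbb{J}_P$ is $2\mathbf{T}_n$ plus a correction generated by the $\mathbf{a}$-dependence of $\mathbf{T}(\mathbf{a})^{-1}\mathbf{b}$; keeping only the data part $B:=2\mathbf{T}_n$ gives the full quasi-Newton step
\[
\mathbf{a}-B^{-1}\nabla\mathbb{J}_P(\mathbf{a})=\mathbf{a}-\mathbf{T}_n^{-1}\bigl(\mathbf{T}_n\mathbf{a}-\mathbf{H}_b\mathbf{T}(\mathbf{a})^{-1}\mathbf{b}\bigr)=\mathbf{T}_n^{-1}\mathbf{H}_b\,\mathbf{T}(\mathbf{a})^{-1}\mathbf{b},
\]
which is precisely the right-hand side of \eqref{fixed-pt_iter_ordy} with $\sigma^2=1$; for a generic iterate, \eqref{fixed-pt_iter_ordy} applies this step multiplied by the scaling $\sigma^2(\mathbf{a}^{(k)})$, which by the previous paragraph tends to $1$ as $\mathbf{a}^{(k)}$ approaches a critical point. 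This is the asserted interpretation.

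For the second part I would \emph{not} try to prove that $\mathbb{J}_P$ is convex in $\mathbf{a}$ — it is not, in general, because the denominator enters through the \emph{quadratic} map $Q(z)=a(z)a(z^{-1})$ — but instead transport the problem to the standard convex setting. Writing $q_m=\sum_j a_j a_{j+m}$, one checks $\mathbf{a}^\top\mathbf{T}_n\mathbf{a}=\sum_k q_k c_k$ and $a(e^{-i\theta})a(e^{i\theta})=Q(e^{i\theta})$, so that $\mathbb{J}_P(\mathbf{a})=\mathbb{J}\bigl(Q(\mathbf{a})\bigr)$ where $\mathbb{J}(Q)=\sum_k q_k c_k-\int_{-\pi}^{\pi}b(e^{-i\theta})b(e^{i\theta})\log Q\,\frac{d\theta}{2\pi}$ is the dual functional of the moment problem of Problem~\ref{RCEP} with numerator $P=b(z)b(z^{-1})$. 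This $\mathbb{J}$ is strictly convex on the open cone of positive pseudo-polynomials of degree $n$, and by the theory of \cite{BGuL,SIGEST} (and of \cite{PEthesis,En,PE} in the present ARMA formulation) it attains there a unique minimizer $Q^\star$ at which the moment conditions \eqref{moment_cond} hold; since then $\Phi=P/Q^\star$ is a genuine spectral density, $Q^\star>0$ on the unit circle. Spectral factorization $Q\mapsto a$, normalized so that $a_0>0$, is a diffeomorphism of $\mathcal{S}_n$ onto that cone; as $\mathbb{J}$ is convex with interior minimum, $\mathbb{J}(Q)\ge\mathbb{J}(Q^\star)$ for every admissible $Q$, and transporting back, $\mathbb{J}_P$ has a unique minimizer $\mathbf{a}^\star$ on $\mathcal{S}_n$ (with the normalization of the statement), namely the AR spectral factor of the solution.

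The delicate point — and the reason the detailed proof is deferred — is the bookkeeping around the scaling. The iteration \eqref{fixed-pt_iter_ordy} is equivariant under $\mathbf{a}\mapsto\lambda\mathbf{a}$ (since $\sigma^2(\lambda\mathbf{a})=\lambda^2\sigma^2(\mathbf{a})$ and $\mathbf{T}(\lambda\mathbf{a})^{-1}=\lambda^{-1}\mathbf{T}(\mathbf{a})^{-1}$), whereas $\mathbb{J}_P$ is not scale invariant; so it is the built-in factor $\sigma^2(\mathbf{a}^{(k)})$, rather than the bare quasi-Newton step, that must be shown to produce a descent step for $\mathbb{J}_P$ and to steer the iterates to the (scale-normalized) minimizer $\mathbf{a}^\star$. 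This, together with the contour-integral identities used above, is exactly what the convergence argument for the periodic Problem~\ref{ARMA_identify} in Section~\ref{Proof} supplies, and Theorem~\ref{propos_quasiNewton_ordy} then follows from Theorem~\ref{periodThm} by specialization.
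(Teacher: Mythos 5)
Your sketch is correct and follows essentially the route the paper itself takes: the paper omits the proof of this theorem as a particularization of the periodic case, where Lemma~\ref{propos_grdt} gives exactly your gradient identity $\tfrac12\nabla\mathbb{J}_P(\mathbf{a})=\mathbf{T}_n\mathbf{a}-\mathbf{T}_\gamma(\mathbf{a})\mathbf{b}$ (your $\mathbf{H}_b\mathbf{T}(\mathbf{a})^{-1}\mathbf{b}$ is its causal specialization), the quasi-Newton reading with approximate Hessian $2\mathbf{T}_n$ is \eqref{quasi_Newton}--\eqref{quasi_Newton_scale}, and existence/uniqueness is obtained, as you do, by transporting to the convex dual in $Q$ via the spectral-factorization diffeomorphism of \cite{En}. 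Your observations that $\sigma^2(\mathbf{a})=1$ at a critical point and that the iteration is scale-equivariant while $\mathbb{J}_P$ is not are accurate and correctly identify the bookkeeping the deferred convergence proof must handle.
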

For, it is shown in    \cite{En},  that the optimization problem
\begin{equation}\label{opt_a_ordy}
\min_{a(z)\in\mathcal{S}_n} \;  \mathbb{J}_P(\mathbf{a}) 
\end{equation}
has a unique solution in the feasible set.\\
  Algorithm \eqref{fixed-pt_iter_ordy} with some adaptations has been successfully tested in several examples. However our main interest is in periodic processes   which we shall turn to in the next section.

\section{The covariance extension problem for periodic processes}\label{covextsec}
Consider a zero-mean  second order real  process $\{y(t)\}$,  defined on a finite interval $[-N+1,\,N]$ of the integer line $\Zbb$ and extended to all of $\Zbb$ as a periodic  process with  period $2N$ so that 
\begin{equation}
\label{periodic2N}
y(t +2kN) =y(t) 
\end{equation}
almost surely. We shall as usual, say that $\{y(t)\}$ is {\em stationary}  if  the covariance lags  $c_k:= \E\{y(t+k) y(t)\}$ do not depend on time and hence the covariance  matrix with entries $\{c_k\}$ has a Toeplitz structure. In fact, as shown in \cite{Carli-FPP} in order for  the random vector 
\begin{equation}\label{vecy}
\yb:= \bmat y(t-N+1)& y(t-N+2)&\ldots&y(t+N)\emat^{\top}
\end{equation}
to represent the restriction to $[-N+1,\,N]$ of  a periodic process on $\Zbb$,   the covariance   $\Tb_{2N}:= \E\,\yb\yb^{\top}$,   must be  a {\em circulant matrix}, namely it must have the form
\begin{align} \label{ToeplitzCirc}
\Tb_{2N}& =\begin{bmatrix} c_0&c_1&\cdots &c_N &c_{N-1}&\cdots &c_1\\
				c_1&c_0&\cdots& c_{N-1}&c_{N}& \cdots& c_2\\
				\vdots&\vdots&\ddots&\vdots&&&\vdots\\
				c_N& c_{N-1}&c_{N-2}& & &\cdots&c_{N-1}  \\
				\vdots&\vdots&\vdots&\ddots&&&\vdots\\
				c_1&\cdots &c_{N}&c_{N-1} &\cdots &c_1&c_0
 		\end{bmatrix}\,,\\
		&=\Circ\{ c_0,c_1,c_2,\dots, c_N,c_{N-1},\dots,c_2 ,c_1 \}
\end{align}
where the columns are shifted cyclically, the last component moved to the top \cite{Davis-79}.  Circulant matrices will play a key role in the following. 

By stationarity $y$ has a spectral representation
\begin{equation}\label{specy }
y(t)=\int_{-\pi}^\pi e^{it\theta}d\hat{y}(\theta), \quad \text{ where} \qquad  \E \{|d\hat{y}|^2\}=dF(e^{i\theta}) \,, 
\end{equation}
is the spectral distribution (see, e.g., \cite[p. 74]{LPBook}), and therefore
\begin{equation}
\label{F2c}
c_k:= \E\{y(t+k)y(t)\} = \int_{-\pi}^\pi e^{ik\theta}dF(e^{i\theta}). 
\end{equation}
Because of the periodicity  condition \eqref{periodic2N}, the support of the spectral distribution $dF$ must be contained in    the {\em discrete unit circle} $\mathbb{T}_{2N}:=\{\zeta_{-N+1},\zeta_{-N+2},\dots,\zeta_N\}$, where
\begin{equation}
\label{zetadefn}
\zeta_k=e^{ik\pi/N}. 
\end{equation}
As explained in the appendix, see equation (\ref{Phi2c}), one can represent $dF$ as
$dF= \Phi \, d\nu$ where $d\nu$ is a uniform discrete measure supported on $\mathbb{T}_{2N}$ and   $\Phi$ is  the discrete Fourier transform (DFT) of the sequence $(c_{-N+1},\dots,c_N)$, called the {\em spectral density} of $\yb$,    
\begin{equation}
\label{Phi}
\Phi(\zeta)=\sum_{k=-N+1}^N c_k\,\zeta^{-k}
\end{equation} 
which is in fact {\em the symbol} of the circulant matrix $\Tb_{2N}$. This is a nonnegative  function of the discrete variable $\zeta \in\mathbb{T}_{2N}$ which is strictly positive  if and only if the $2N\times 2N$ covariance matrix $\Tb_{2N}$ is positive definite, see \cite[Proposition 2]{CarliGeorgiou}, that is to say, the process is {\em full rank} which we shall assume all through this paper.

Suppose now that we are given a partial covariance sequence $c_0,c_1,\dots,c_n$ with $n< N$, such that the Toeplitz matrix $\Tb_n$ is positive definite. Consider the problem of finding an extension $c_{n+1},c_{n+2},\dots,c_{N}$ which, once imposing the periodic midpoint reflection condition $c_{N+k}=c_{N-k}$ valid for real covariances,  makes the  sequence  $c_0,c_1,\dots,c_N$ a covariance sequence of a stationary process of period $2N$. 

In general this problem, whenever feasible,  will have infinitely many solutions. We are  however interested in finite complexity solutions only  and so  we shall restrict our attention to spectral functions \eqref{Phi} which are {\em rational} in the sense that 
\begin{equation} \label{Phi=P/Q}
\Phi(\zeta)=\frac{P(\zeta)}{Q(\zeta)},
\end{equation}
where $P$ and $Q$ are symmetric pseudo-polynomials of degree (at most) $n$, that is of the form
\begin{equation}
\label{P}
P(\zeta)=\sum_{k=-n}^n p_k \zeta^{-k}, \quad p_{-k}=p_k. 
\end{equation}

A convex optimization approach to determine rational solutions $\Phi(\zeta)$ is proposed by Lindquist and Picci in \cite{LPcirculant} where a complete parametrization of all such solutions is described. Feasibility of the optimization problem can be described in the language of moment problems  \cite{KreinNudelman,BLkrein} as follows.\\
  Let $\mathfrak{P}_+(N)$ be the cone of all symmetric pseudo-polynomials \eqref{P} of degree $n$ that are positive on the discrete unit circle $\mathbb{T}_{2N}$, and let $\mathfrak{P}_+\subset\mathfrak{P}_+(N)$ be the subset of pseudo-polynomials \eqref{P} such that $P(e^{i\theta})>0$ for all $\theta\in [-\pi,\pi]$. Moreover let $\mathfrak{C}_+(N)$ be the dual cone of all 
partial covariance sequences $\cb=(c_0,c_1,\dots,c_n)$ such that 
\begin{displaymath}
\langle \cb,\pb\rangle :=\sum_{k=-n}^n c_k p_k >0 \quad \text{for all $P\in\overline{\mathfrak{P}_+(N)}\setminus\{0\}$},
\end{displaymath}
and let $\mathfrak{C}_+$ be defined in the same way as the dual cone of $\mathfrak{P}_+$. It can be shown \cite{KreinNudelman} that $\cb\in\mathfrak{C}_+$ is equivalent to the Toeplitz condition $\Tb_n>0$. Since $\mathfrak{P}_+\subset\mathfrak{P}_+(N)$, we have $\mathfrak{C}_+(N)\subset \mathfrak{C}_+$, so in general $\cb\in\mathfrak{C}_+(N)$ is a stricter condition than $\Tb_n>0$. 

Conditions for the existence and uniqueness of the solution to the extension problem can then be stated in the following way. 

\begin{thm}\label{mainthm}
Let $\cb\in\mathfrak{C}_+(N)$. Then, for each $P\in\mathfrak{P}_+(N)$ of degree $n$, there is a unique $Q\in\mathfrak{P}_+(N)$ such that the rational function
$
\Phi=\frac{P}{Q}
$
satisfies the moment conditions 
\begin{equation}
\label{momentconditions}
\int_{-\pi}^\pi e^{ik\theta}\Phi(e^{i\theta})d\nu(\theta) =c_k, \quad k=0,1,\dots,n.
\end{equation}
\end{thm}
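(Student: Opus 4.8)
The plan is to realize $Q$ as the unique minimizer of a strictly convex functional whose stationarity conditions are precisely the moment equations \eqref{momentconditions}, following the convex-duality pattern of the Byrnes--Georgiou--Lindquist theory. Since the spectral density lives on the finite set $\mathbb{T}_{2N}$, every integral $\int_{-\pi}^{\pi}(\cdot)\,d\nu(\theta)$ is a finite sum over the $2N$ points of $\mathbb{T}_{2N}$ and the problem is genuinely finite dimensional: a symmetric pseudo-polynomial $Q$ as in \eqref{P} is identified with its real coefficient vector $\qb=(q_0,q_1,\dots,q_n)$. For fixed $P\in\mathfrak{P}_+(N)$ I would introduce the dual functional
\begin{equation}
\mathbb{J}_P(Q):=\langle\cb,\qb\rangle-\int_{-\pi}^{\pi}P(e^{i\theta})\log Q(e^{i\theta})\,d\nu(\theta),\qquad Q\in\mathfrak{P}_+(N),
\end{equation}
obtained by forming the Lagrangian of $\max\int P\log\Phi\,d\nu$ over spectral densities $\Phi$ satisfying \eqref{momentconditions}, eliminating $\Phi$ through the stationarity relation $\Phi=P/Q$ (which forces $Q>0$ on $\mathbb{T}_{2N}$, i.e. $Q\in\mathfrak{P}_+(N)$), and discarding the additive constants $\int P\log P\,d\nu$ and $\int P\,d\nu$.

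First I would prove strict convexity of $\mathbb{J}_P$. The pairing term is linear in $\qb$, and a direct differentiation of the second term gives the Hessian quadratic form $\int_{-\pi}^{\pi}(P/Q^2)\,X(e^{i\theta})^2\,d\nu(\theta)$ evaluated on a symmetric perturbation $X(\zeta)=\sum_{k=-n}^{n}x_k\zeta^{-k}$ (real on the unit circle). Since $P/Q^2>0$ on $\mathbb{T}_{2N}$ this form is nonnegative and vanishes only if $X\equiv0$ on $\mathbb{T}_{2N}$; but $\zeta^{n}X(\zeta)$ is a polynomial of degree at most $2n<2N$, so it cannot vanish at the $2N$ distinct points of $\mathbb{T}_{2N}$ unless $X=0$. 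Hence $\mathbb{J}_P$ is strictly convex on the open cone $\mathfrak{P}_+(N)$.

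Next I would show $\mathbb{J}_P$ attains its infimum at an interior point. Writing $Q=rQ_0$ with $r=\|\qb\|>0$ and $\|\qb_0\|=1$, the hypothesis $\cb\in\mathfrak{C}_+(N)$ gives, by compactness of $\{\qb_0\in\overline{\mathfrak{P}_+(N)}:\|\qb_0\|=1\}$, a uniform bound $\langle\cb,\qb_0\rangle\ge\delta>0$, while $\int P\log Q_0\,d\nu$ is bounded above (as $Q_0$ is bounded). Therefore $\mathbb{J}_P(rQ_0)\ge r\delta-\big(\int P\,d\nu\big)\log r-C$, a function coercive in $r$, so every sublevel set lies in an annulus $0<\rho\le\|\qb\|\le R$; on that compact region $\mathbb{J}_P$ is continuous and blows up to $+\infty$ as $Q$ approaches $\partial\mathfrak{P}_+(N)$ (some $Q(\zeta_j)\to0$ while $P(\zeta_j)>0$, making $-P\log Q\to+\infty$), so the sublevel sets are compact and a minimizer $\hat Q$ exists in the interior $\mathfrak{P}_+(N)$. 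Setting the gradient to zero at $\hat Q$ yields $c_k-\int(P/\hat Q)e^{-ik\theta}\,d\nu=0$ for $k=0,\dots,n$, i.e. $\Phi=P/\hat Q$ satisfies \eqref{momentconditions} with $\hat Q\in\mathfrak{P}_+(N)$. For uniqueness, any $Q\in\mathfrak{P}_+(N)$ for which $P/Q$ solves \eqref{momentconditions} is a stationary point of $\mathbb{J}_P$, and by strict convexity stationary points are unique, so $Q=\hat Q$.

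I expect the main obstacle to be the interior-minimizer step: showing that $\mathbb{J}_P$ is proper on the \emph{open} cone $\mathfrak{P}_+(N)$ --- equivalently, that the minimizing $Q$ cannot escape to the boundary where $\Phi=P/Q$ loses positivity --- is exactly where the dual-cone condition $\cb\in\mathfrak{C}_+(N)$ (strictly stronger than $\Tb_n>0$, since $\mathfrak{P}_+\subset\mathfrak{P}_+(N)$) is indispensable and where the real work lies; strict convexity and the moment-matching identification of the optimality conditions are routine by comparison.
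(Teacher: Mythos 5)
Your proposal is correct and follows essentially the same route as the paper's own (cited) proof: the paper defers to \cite{LPcirculant} and, via Theorem~\ref{optthm}, identifies $Q$ as the unique minimizer of exactly the dual functional $\mathbb{J}_P(Q)=\langle\cb,\qb\rangle-\int P\log Q\,d\nu$ that you construct, with strict convexity, coercivity from $\cb\in\mathfrak{C}_+(N)$, and the boundary blow-up of $-P\log Q$ delivering existence, interiority and uniqueness just as you argue. Your finite-dimensional strict-convexity argument (a degree-$2n$ polynomial cannot vanish at $2N>2n$ points) and the annulus/compactness step are precisely the discrete-circle adaptations that distinguish this from the $\Zbb$-line case in \cite{BGuL,SIGEST}.
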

Consequently the family of solutions \eqref{Phi=P/Q} of the periodic covariance extension problem stated above is parameterized by $P\in\mathfrak{P}_+(N)$ in a bijective fashion. A key result of the theory is that, for any $P\in\mathfrak{P}_+(N)$, the corresponding unique $Q\in\mathfrak{P}_+(N)$ can be obtained by convex optimization.  

\begin{thm}\label{optthm}
Let $\cb\in\mathfrak{C}_+(N)$ and $P\in\mathfrak{P}_+(N)$. Then  the problem to maximize
\begin{equation}
\label{primal}
\mathbb{I}_P(\Phi) =\int_{-\pi}^\pi  P(e^{i\theta})\log \Phi(e^{i\theta})d\nu
\end{equation}
subject to the moment conditions \eqref{momentconditions} has a unique solution  of the form \eqref{Phi=P/Q}, where $Q$ is the unique optimal solution of the problem to minimize over all $Q\in\mathfrak{P}_+(N)$ the dual functional
\begin{equation}\label{dual}
\mathbb{J}_P(Q)= \langle \cb,\qb\rangle -\int_{-\pi}^\pi  P(e^{i\theta})\log Q(e^{i\theta})d\nu
\end{equation}
where $\qb:=(q_0,q_1,\dots,q_n)$ are the coefficients of $Q$.  The functional $\mathbb{J}_P$ is strictly convex. 
\end{thm}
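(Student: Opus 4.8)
The plan is to read Theorem~\ref{optthm} as a finite–dimensional convex duality statement and prove it by the standard Lagrangian argument adapted to the discrete unit circle, following the pattern of the continuous theory in \cite{BGuL}. Since $d\nu$ is a uniform measure on the $2N$ points of $\mathbb{T}_{2N}$, a spectral density is just the vector of its values $(\Phi(\zeta))_{\zeta\in\mathbb{T}_{2N}}\in\Rbb^{2N}_{>0}$, and $\mathbb{I}_P(\Phi)=\sum_{\zeta}P(\zeta)\log\Phi(\zeta)\,\nu(\{\zeta\})$ is strictly concave because $P>0$ on $\mathbb{T}_{2N}$ and $\log$ is strictly concave; hence the primal has at most one maximizer. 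First I would introduce multipliers $q_0,\dots,q_n$ for the $n+1$ affine moment constraints \eqref{momentconditions}, assemble them into the symmetric pseudo-polynomial $Q(\zeta)=\sum_{k=-n}^n q_k\zeta^{-k}$, and form the Lagrangian $L(\Phi,\qb)=\mathbb{I}_P(\Phi)+\langle\cb,\qb\rangle-\int Q\,\Phi\,d\nu$. Maximizing over $\Phi>0$ with $\qb$ fixed, the stationarity condition $P/\Phi=Q$ on $\mathbb{T}_{2N}$ forces $\Phi=P/Q$; this is admissible — and the supremum finite — precisely when $Q\in\mathfrak{P}_+(N)$, and substituting back gives $\sup_\Phi L(\Phi,\qb)=\mathbb{J}_P(Q)+\int P(\log P-1)\,d\nu$, the additive constant being independent of $Q$. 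Thus weak duality holds, $\mathbb{I}_P(\Phi)\le\mathbb{J}_P(Q)+\mathrm{const}$ for every feasible $\Phi$ and every $Q\in\mathfrak{P}_+(N)$, with equality at $(\Phi,Q)$ iff $\Phi=P/Q$ and $\Phi$ is feasible.

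Next I would verify that $\mathbb{J}_P$ is strictly convex on $\mathfrak{P}_+(N)$: the term $\langle\cb,\qb\rangle$ is linear in $\qb$, while $-\int P\log Q\,d\nu=\sum_\zeta -P(\zeta)\log Q(\zeta)\,\nu(\{\zeta\})$ is the composition of the strictly convex map $t\mapsto-\log t$ (with positive weights) with the linear map $\qb\mapsto(Q(\zeta))_{\zeta\in\mathbb{T}_{2N}}$, which is injective because a pseudo-polynomial of degree $n<N$ has only $2n+1<2N$ Fourier coefficients and is therefore determined by its values on $\mathbb{T}_{2N}$. Strict convexity gives uniqueness of a minimizer of $\mathbb{J}_P$ as soon as one exists; and then setting $\nabla\mathbb{J}_P=0$ at the minimizer $\hat Q$ yields exactly the moment conditions $\int e^{ik\theta}(P/\hat Q)\,d\nu=c_k$, so $\hat\Phi:=P/\hat Q$ is primal feasible, weak duality is met with equality at $(\hat\Phi,\hat Q)$, and $\hat\Phi$ is the unique primal maximizer, of the asserted form $P/Q$.

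The \emph{crux} is the existence of the dual minimizer, and this is where the hypothesis $\cb\in\mathfrak{C}_+(N)$ — stronger than $\Tb_n>0$ — is essential. I would show that $\mathbb{J}_P$ is continuous on the open cone $\mathfrak{P}_+(N)$, blows up at its boundary (if some $Q(\zeta_0)\to 0^+$ then $-P(\zeta_0)\log Q(\zeta_0)\to+\infty$ while the other terms stay bounded below), and is radially coercive: writing $Q=rQ_0$ with $Q_0$ on the unit sphere of the closed cone $\overline{\mathfrak{P}_+(N)}$ and $r\to\infty$, the linear part contributes $r\langle\cb,\qb_0\rangle$ whereas the logarithmic part grows only like $-(\log r)\int P\,d\nu$. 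By definition of $\mathfrak{C}_+(N)$, $\langle\cb,\pb\rangle>0$ for every nonzero $P\in\overline{\mathfrak{P}_+(N)}$, so by compactness there is $\delta>0$ with $\langle\cb,\qb_0\rangle\ge\delta$ uniformly, whence $\mathbb{J}_P(rQ_0)\to+\infty$. Consequently the sublevel sets of $\mathbb{J}_P$ are compact subsets of $\mathfrak{P}_+(N)$, a minimizer $\hat Q$ exists, and the theorem follows from the previous paragraph. Apart from this coercivity/boundary analysis the argument is routine bookkeeping; the one point requiring care beyond it is to check that the additive constant and the symmetry conventions $p_{-k}=p_k$, $q_{-k}=q_k$ match so that the stationarity computation for $\mathbb{J}_P$ reproduces precisely the moment conditions \eqref{momentconditions} (rather than a rescaled version of them).
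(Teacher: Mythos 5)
Your argument is correct and is essentially the proof the paper points to: the paper itself does not prove Theorem~\ref{optthm} but refers to \cite{LPcirculant}, where exactly this Lagrangian-duality scheme is carried out — strict concavity of the primal, stationarity forcing $\Phi=P/Q$, strict convexity of $\mathbb{J}_P$ via injectivity of $\qb\mapsto Q|_{\Tbb_{2N}}$ (which needs $n<N$, as you note), and existence of the dual minimizer from coercivity supplied by $\cb\in\mathfrak{C}_+(N)$ together with the logarithmic blow-up at the boundary of the cone. No gaps; your identification of the coercivity/boundary analysis as the crux, and of the role of the stronger hypothesis $\cb\in\mathfrak{C}_+(N)$, matches the cited source.
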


We refer the reader to \cite{LPcirculant} for the proofs. Theorems \ref{mainthm} and \ref{optthm} are discrete versions of  results in \cite{BGuL,SIGEST} valid for the integer line $\Zbb$.
The solution corresponding to $P=1$ is called the {\em maximum-entropy solution\/} by virtue of \eqref{primal}.

\subsection*{Covariance extension by unilateral periodic ARMA models}\label{CovARMAsect}
Periodic processes can be conveniently seen as being defined on the finite group $\Zbb_{2N}$ made of the discrete interval $[-N+1,N]$ with arithmetics modulo $2N$.   There $\{y(t)\}$ can be  represented as a $2N$-dimensional vector as in \eqref{vecy}.  We are interested in    periodic processes which can   be represented by  {\em unilateral} ARMA models of the form
\begin{equation}\label{ARMA_periodic}
\sum_{k=0}^{n}a_ky(t-k)=\sum_{k=0}^{n}b_kw(t-k),\qquad t\in\mathbb{Z}_{2N}
\end{equation}
where $\{w(t)\}$ is a periodic  white noise process on  $\Zbb_{2N}$ of variance $\sigma^2$ and $\{a_k\}$ and $\{b_k\}$ are the coefficients of two  Schur   polynomials $a(z),\,b(z)$ where we shall  again take    $b_0=1$. An alternative is to leave $b_0$ free and normalize   $\sigma^2=1$, depending on convenience.  To impose periodicity to \eqref{ARMA_periodic} we need to impose  periodic boundary conditions at the endpoints; i.e.
\begin{equation}\label{BC_prd}
y(-N)=y(N),\dots,\ y(-N-n+1)=y(N-n+1)\, 
\end{equation}
 which leads to a circulant matrix representation of the model \eqref{ARMA_periodic}. Introducing the  vector notation
\[\mathbf{w}=[w(-N+1),w(-N+2),\dots,w(0),w(1),\dots,w(N)]^\top,\]
 we   have $\mathbb{E}\{\mathbf{w}\mathbf{w}^\top\}=\sigma^2\Ib_{2N}$ and a  unilateral ARMA model describing a scalar-valued periodic stationary process $\yb$ may then be rewritten  compactly as a matrix-vector equation
\begin{equation}\label{ARMA_mat}
\mathbf{A}\mathbf{y}=\mathbf{B}\mathbf{w},
\end{equation}
in which $\mathbf{A}$ and $\mathbf{B}$ are $2N\times2N$ nonsingular lower-triangular circulant matrices of bandwidth $n$
\[\begin{split}
\mathbf{A} & =\mathrm{Circ}\{a_0,\,a_1,\dots,a_n,0,\dots,0\}, \\
\mathbf{B} & =\mathrm{Circ}\{1,\,b_1,\dots,b_n,0,\dots,0\}.
\end{split}\]
That a large class of  periodic processes can admit unilateral ARMA representations is shown in \cite{Pi-MTNS016}\footnote{In this conference paper an important positivity condition is overlooked; the correct representability condition is discussed later   in this paper.}, and is also surveyed in Appendix \ref{unilateralARMAsec}. We consider now the analog of Problem   \ref{ARMA_ident_ordy} for periodic processes.

\begin{prob}[The Circulant Rational Covariance extension Problem (CRCEP)]\label{ARMA_identify}
	Suppose we are given the $n$ coefficients $\{b_k\,;\,k=1,2,\ldots,n\}$ of a Schur   polynomial and $n+1$ real numbers $c_0,c_1,\dots,c_n$ such that the Toeplitz matrix \eqref{Toeplitz} is positive definite. We want to determine the coefficients $\{a_k\}$    such that the first $n+1$ covariance lags of the periodic process $\{y(t)\}$ in \eqref{ARMA_periodic} match the sequence $\{c_k\,;\, k=0,1,\ldots,n\,\}$.
\end{prob}
If we were able to solve this problem, i.e., to obtain the matrix $\mathbf{A}$, letting $\mathbf{\Sigma}:=\mathbb{E}\{\mathbf{y}\mathbf{y}^\top\}$ and taking covariance on both sides of the equation (\ref{ARMA_mat}), we would have
\[\mathbf{A}\mathbf{\Sigma}\mathbf{A}^\top=\sigma^2 \,\mathbf{B}\mathbf{B}^\top\,.\]
Taking inverses and rearranging terms, due to the commutativity of circulant matrices, this would lead to a representation of $\Sigmab$ as the ratio of two positive circulants, i.e.
\[\begin{split}
\mathbf{\Sigma}  =\sigma^2 \mathbf{A}^{-1}\mathbf{B}\mathbf{B}^\top\mathbf{A}^{-\top}  
  & = \sigma^2(\mathbf{A}\mathbf{A}^\top)^{-1}\mathbf{B}\mathbf{B}^\top \\
  &: =\sigma^2 \mathbf{Q}^{-1}\mathbf{P},
\end{split}\]
where $\mathbf{Q}:=\mathbf{A}\mathbf{A}^\top$, $\mathbf{P}:=\mathbf{B}\mathbf{B}^\top$. This procedure would then solve the same circulant rational covariance extension problem    discussed in \cite{LPcirculant,LP}. Here $\mathbf{B}$ is  a circulant factor of the bilaterally $n$-banded circulant matrix $\mathbf{P}$ having the given symbol $P(\zeta)$, see Appendix \ref{unilateralARMAsec}. 

We shall show in the next sections that Problem \ref{ARMA_identify} can be converted   to the solution of a nonlinear equation similar to \eqref{equat_a}. The solution can be obtained by an iterative algorithm which, although looking similar to \eqref{fixed-pt_iter_ordy} turns out to be more difficult to analyze. The proof of its convergence will be one of the main results of this paper.

\subsection*{Spectral representation of periodic ARMA models}

In terms of the discrete Fourier transform (DFT) of the random variables $\{y(t),\,w(t)\,; t\in\mathbb{Z}_{2N}\}$ defined as
\begin{equation}
\hat{y}(\zeta_k)=\!\!\!\!\!\sum_{t=-N+1}^{N}y(t)\zeta_k^{-t},\quad \hat{w}(\zeta_k)=\!\!\!\!\!\!\sum_{t=-N+1}^{N}w(t)\zeta_k^{-t},\quad k\in \mathbb{Z}_{2N}
\end{equation}
the model  (\ref{ARMA_periodic}) can be rewritten
\begin{equation}\label{ARMA_freq}
a(\zeta)\hat{y}(\zeta)=b(\zeta)\hat{w}(\zeta),\qquad \zeta\in\mathbb{T}_{2N}
\end{equation}
where the polynomials $a(\zeta),\; b(\zeta)$ are defined in 	terms of the coefficients of the model \eqref{ARMA_periodic}as
\begin{equation}\label{poly_ab-DFT}
a(\zeta):=\sum_{k=0}^{n}a_k\, \zeta^{-k},\quad  b(\zeta):=\sum_{k=0}^{n}b_k \, \zeta^{-k}\,.
\end{equation}
Here we shall choose to fix $b_0=1$.\\
The solution of \eqref{ARMA_freq} can formally be written as
\begin{equation}\label{ARMA_DFT}
\hat{y}(\zeta)=\frac{b(\zeta)}{a(\zeta)}\hat{w}(\zeta),\qquad \zeta\in\mathbb{T}_{2N},
\end{equation}
whence, denoting the inverse DFT of $\frac{b(\zeta)}{a(\zeta)}$ by $\gamma:= \{\gamma_k\,;\, k= -N+1,\ldots,N\}$,   one obtains  a one-sided   representation of $\{y(t)\}$ in terms of the input noise $\{w(t),t\in\mathbb{Z}_{2N}\}$
\begin{equation}\label{two-sided}
y(t)=\sum_{s=-N+1}^{N}\gamma_{t-s}w(s).
\end{equation}
Since the DFT $\hat{w}(\zeta_k)$ satisfies 
\[\Frac{1}{2N}\,\mathbb{E}\left[\hat{w}(\zeta_k)\overline{\hat{w}(\zeta_l)}\right]=\sigma^2\delta_{kl} \]
the spectral density of $\{y(t)\}$ is   readily obtained as
\begin{equation}\label{spec_disc}
\Phi(\zeta_k)=\Frac{1}{2N}\, \mathbb{E}\left[\hat{y}(\zeta_k)\overline{\hat{y}(\zeta_k)}\right]=\sigma^2\frac{b(\zeta_k)b(\zeta_k^{-1})}{a(\zeta_k)a(\zeta_k^{-1})}.
\end{equation}
Notice that the sequence $\{\gamma_t\}$ is naturally periodic and is readily computable from the polynomials $a(\zeta)$ and $b(\zeta)$ via inverse FFT. We note for future use that 
$\mathbb{E}[y(t)w(s)]=\sigma^2\gamma_{t-s}.$
One can see that the one-sided representation (\ref{two-sided}) in the periodic case is different from (\ref{MA_infty}), since here we do not have  causality and $y(t)$  depends on $\{w(s)\}$ over the whole interval $[-N+1,N]$.

One can also rewrite (\ref{two-sided}) in  matrix notation as
\begin{equation}\label{circ_one_sided}
\mathbf{y}=\mathbf{\Gamma}\mathbf{w},
\end{equation}
where $\mathbf{\Gamma}\!\!\!=\!\! \mathrm{Circ}\{\gamma_0,\gamma_1,\dots,\gamma_N,\gamma_{-N+1},\dots,\gamma_{-1}\}\in\mathbb{R}^{2N\times 2N}$, the relative  symbol being 
\begin{equation}\label{GammaSymbol}
\Gamma(\zeta):=\sum_{t=-N+1}^{N}\gamma_t\,\zeta^{-t}=\frac{b(\zeta)}{a(\zeta )}.
\end{equation}
Note that the matrix $\mathbf{\Gamma}$ has a very simple expression  in terms of the  circulant matrices of coefficients $\Ab,\,\Bb$, since from  spectral theory (see Appendix \ref{AppA})  we have
\begin{equation}\label{GammaAB}
\mathbf{\Gamma}=\mathbf{A}^{-1}\mathbf{B}.
\end{equation}
Now, taking covariances on both sides of (\ref{circ_one_sided}), we obtain the circulant factorization
$ \mathbf{\Sigma}=\sigma^2\mathbf{\Gamma}\mathbf{\Gamma}^\top $
from which, combining the model equation (\ref{ARMA_mat}) with (\ref{circ_one_sided}), it is easily   seen that
\begin{equation}\label{circ_relation}
\mathbf{A}\mathbf{\Sigma}=\sigma^2\mathbf{B}\mathbf{\Gamma}^\top\,.
\end{equation}
From this relation we can now proceed to derive the nonlinear equation and an iterative scheme for the coefficient vector of the polynomial $a(\zeta)$.

Multiply    (\ref{ARMA_periodic}) on both sides by
$ y(t-j)=\sum_{s=-N+1}^{N}\gamma_{t-j-s}w(s),$
 and take expectation to obtain
\begin{equation} \label{ARMAforc}
\sum_{k=0}^{n}a_kc_{k-j}=\sigma^2\sum_{k=0}^{n}b_k\gamma_{k-j},\qquad j=0,1,\dots,n.
\end{equation}
This is a system of equations which can be written in   matrix form as
\begin{equation}\label{AR_coeff}
\mathbf{T}_n\mathbf{a}=\sigma^2\mathbf{T}_\gamma\mathbf{b}
\end{equation}
where now 
$$
\mathbf{T}_\gamma = \bmat
\gamma_0 & \gamma_1 & \dots & \gamma_n \\
\gamma_{-1} & \gamma_0 &   & \vdots \\
\vdots &   & \ddots & \gamma_1 \\
\gamma_{-n} & \dots & \gamma_{-1} & \gamma_0 \\
\emat
$$ is a full Toeplitz matrix. Since  for fixed $\bb$ the sequence $\gamma$ is a function of $\ab$ we shall denote $\mathbf{T}_\gamma$ as $\mathbf{T}_\gamma(\ab)$. It can   be computed say by inverse FFT of \eqref{GammaSymbol} or by \eqref{GammaAB}. In fact, (\ref{AR_coeff}) is just the $(n+1)\times(n+1)$ upper-left corner of the matrix equation (\ref{circ_relation}). The resulting nonlinear equation for $\ab$ is a bit more implicit than  \eqref{equat_a}. However a similar iterative scheme can be devised to solve it, say 
\begin{equation}\label{fixed-point_iteration}
\mathbf{a}^{(k+1)}=\sigma^2(\ab^{(k)})\,\mathbf{T}_n^{-1}\mathbf{T}_\gamma(\mathbf{a}^{(k)})\mathbf{b},\qquad k=0,1,\ldots
\end{equation}
with an initialization in the set $\mathcal{S}_n$ of Schur polynomials of degree $n$,
e.g., the coefficients  of the Levinson polynomial for the ordinary Toeplitz covariance extension of $\Tb_n$. The key observation here is that this algorithm is actually a numerical implementation of the variational solution of the periodic moment problem (Theorem \ref{optthm}).
\begin{thm}\label{propos_quasiNewton}
	The  iteration $(\ref{fixed-point_iteration})$ where the scaling parameter  $\sigma^2(\ab^{(k)})$ is updated in each iteration by the rule
\begin{equation}\label{sigmasq}
\sigma^2(\mathbf{a})=\left.\sum_{k=0}^{n}c_ka_k \middle/ \sum_{k=0}^{n}\gamma_kb_k\right.,\qquad (b_0=1)\,,
\end{equation}
can be interpreted as a quasi-Netwon step  for the minimization of the function
	\begin{equation}\label{func_a}
	\mathbb{J}_P(\mathbf{a})=\mathbf{a}^\top\mathbf{T}_n\mathbf{a}-\int_{-\pi}^{\pi}b(e^{i\theta})b(e^{-i\theta})\log [a(e^{i\theta})a(e^{-i\theta})]\,d\nu\,.
	\end{equation}
\end{thm}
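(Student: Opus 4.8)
The plan is to compute the Euclidean gradient of $\mathbb{J}_P$ as a function of the coefficient vector $\mathbf{a}$, to recognize the recursion $(\ref{fixed-point_iteration})$ with the scalar factor set equal to $1$ as a quasi-Newton step in which the Hessian of $\mathbb{J}_P$ is replaced by the constant matrix $2\mathbf{T}_n$, and finally to account for the multiplier $\sigma^2(\mathbf{a}^{(k)})$ as a dynamic rescaling that equals $1$ at a stationary point.

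First I would differentiate the two parts of $(\ref{func_a})$. The quadratic part gives $\partial(\mathbf{a}^\top\mathbf{T}_n\mathbf{a})/\partial\mathbf{a}=2\mathbf{T}_n\mathbf{a}$. For the integral, write $\zeta=e^{i\theta}$ and use $\partial a(\zeta)/\partial a_j=\zeta^{-j}$ and $\partial a(\zeta^{-1})/\partial a_j=\zeta^{j}$ to obtain $\partial\log[a(\zeta)a(\zeta^{-1})]/\partial a_j=\zeta^{-j}/a(\zeta)+\zeta^{j}/a(\zeta^{-1})$; multiplying by $P(\zeta)=b(\zeta)b(\zeta^{-1})$, integrating against $d\nu$, and using that $P$ and $d\nu$ are invariant under $\zeta\mapsto\zeta^{-1}$, the two resulting contributions coincide, so the $j$-th partial derivative of the integral in $(\ref{func_a})$ equals $2\int P(\zeta)\,a(\zeta)^{-1}\zeta^{-j}\,d\nu$. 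Since $P(\zeta)/a(\zeta)=\Gamma(\zeta)\,b(\zeta^{-1})$, with $\Gamma$ the symbol $(\ref{GammaSymbol})$ whose inverse DFT is the sequence $\gamma$, and since the hypothesis $n<N$ precludes aliasing on $\mathbb{T}_{2N}$, this integral is exactly the $j$-th entry of $\mathbf{T}_\gamma(\mathbf{a})\mathbf{b}$. Therefore
\begin{equation}
\nabla\mathbb{J}_P(\mathbf{a})=2\,[\,\mathbf{T}_n\mathbf{a}-\mathbf{T}_\gamma(\mathbf{a})\mathbf{b}\,],
\end{equation}
and the stationarity condition $\nabla\mathbb{J}_P(\mathbf{a})=0$ is precisely the $\sigma^2=1$ instance of the Yule--Walker system $(\ref{AR_coeff})$.

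Next I would form the Newton iteration for $\nabla\mathbb{J}_P=0$, whose Jacobian is $\nabla^2\mathbb{J}_P(\mathbf{a})=2\mathbf{T}_n-2\,\partial_{\mathbf{a}}[\mathbf{T}_\gamma(\mathbf{a})\mathbf{b}]$, and replace this Jacobian by its constant part $2\mathbf{T}_n$ --- equivalently, freeze the nonlinear map $\mathbf{a}\mapsto\mathbf{T}_\gamma(\mathbf{a})\mathbf{b}$ at the current iterate and solve the residual linear system. This produces $\mathbf{a}^{(k+1)}=\mathbf{a}^{(k)}-(2\mathbf{T}_n)^{-1}\nabla\mathbb{J}_P(\mathbf{a}^{(k)})=\mathbf{T}_n^{-1}\mathbf{T}_\gamma(\mathbf{a}^{(k)})\mathbf{b}$, which is $(\ref{fixed-point_iteration})$ with the scalar factor equal to $1$; multiplying this quasi-Newton direction by $\sigma^2(\mathbf{a}^{(k)})$ yields the stated recursion. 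To see that the extra multiplier is consistent with the interpretation, I would note that $(\ref{sigmasq})$ can be read as $\sigma^2(\mathbf{a})=(\mathbf{T}_n\mathbf{a})_0/(\mathbf{T}_\gamma(\mathbf{a})\mathbf{b})_0$, i.e. $\sigma^2(\mathbf{a})$ is exactly the factor that renders the zeroth row of $\mathbf{T}_n\mathbf{a}=\sigma^2\,\mathbf{T}_\gamma(\mathbf{a})\mathbf{b}$ an identity; consequently at any zero of $\nabla\mathbb{J}_P$, where the full vector equality holds, one has $\sigma^2(\mathbf{a})=1$, so there the stated recursion reduces to the bare quasi-Newton step. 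By Theorem~\ref{mainthm} the Yule--Walker system $\mathbf{T}_n\mathbf{a}=\mathbf{T}_\gamma(\mathbf{a})\mathbf{b}$ has a single solution for the prescribed $P$, and by Theorem~\ref{optthm} --- read through the spectral factorization $Q=a(\zeta)a(\zeta^{-1})$, which converts the dual functional $(\ref{dual})$ into $(\ref{func_a})$ --- that solution is the unique minimizer of $\mathbb{J}_P$; this closes the interpretation. The actual proof that $(\ref{fixed-point_iteration})$ converges, together with the subtleties of discrete spectral factorization and of which scaled representative is approached, is the content of Theorem~\ref{periodThm} in Section~\ref{Proof}.

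The step I expect to be the main obstacle is making rigorous the phrase ``quasi-Newton step with a scaling parameter $\sigma^2$'': the recursion $(\ref{fixed-point_iteration})$ is \emph{not} literally a (quasi-)Newton update issued from $\mathbf{a}^{(k)}$ unless $\sigma^2=1$, the discrepancy being the vector $(1-\sigma^2(\mathbf{a}^{(k)}))\,\mathbf{a}^{(k)}$; the justification must therefore rest on the algebraic compatibility of $(\ref{sigmasq})$ with the gradient formula above, together with the positive homogeneity of the iteration map $\mathbf{a}\mapsto\sigma^2(\mathbf{a})\,\mathbf{T}_n^{-1}\mathbf{T}_\gamma(\mathbf{a})\mathbf{b}$, rather than on an identity valid at every iterate. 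A secondary, purely computational hurdle is the DFT bookkeeping in the gradient calculation: one must check that $\int P(\zeta)\,a(\zeta)^{-1}\zeta^{-j}\,d\nu$ returns $(\mathbf{T}_\gamma(\mathbf{a})\mathbf{b})_j$ and not a cyclic shift of it, which is precisely where the hypothesis $n<N$ is used.
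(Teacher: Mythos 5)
Your proposal is correct and follows essentially the same route as the paper: it computes $\nabla\mathbb{J}_P(\mathbf{a})=2[\mathbf{T}_n\mathbf{a}-\mathbf{T}_\gamma(\mathbf{a})\mathbf{b}]$ exactly as in the paper's Lemma~\ref{propos_grdt} (the paper does the DFT bookkeeping via the identity $\frac{1}{2N}\sum_j \frac{b(\zeta_j)}{a(\zeta_j)}\zeta_j^{-k}=\gamma_{-k}$, which is the same computation as your $P(\zeta)/a(\zeta)=\Gamma(\zeta)b(\zeta^{-1})$ argument), and then rewrites the iteration as $\mathbf{a}^{(k+1)}=\sigma^2(\ab^{(k)})\bigl[\mathbf{a}^{(k)}-\tfrac{1}{2}\mathbf{T}_n^{-1}\nabla\mathbb{J}_P(\mathbf{a}^{(k)})\bigr]$, i.e.\ a quasi-Newton step with the Hessian replaced by $2\mathbf{T}_n$ and an outer scaling. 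Your additional observations --- that $\sigma^2(\mathbf{a})$ is the ratio of the zeroth entries of $\mathbf{T}_n\mathbf{a}$ and $\mathbf{T}_\gamma(\mathbf{a})\mathbf{b}$ and hence equals $1$ at a stationary point, and that the scaled recursion is not literally a Newton update from $\mathbf{a}^{(k)}$ --- are consistent with, and slightly more explicit than, what the paper states.
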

Since \eqref{sigmasq} is exactly     equation \eqref{ARMAforc} written for $j=0$ and since fixing $b_0=1$ in this equation  determines $\sigma^2$ uniquely, we can get an immediate partial result about the convergence of  the iterates $\sigma^2(\ab^{(k)})\,; k=0,1,\ldots$. 
	\begin{prop}	Let $\bb$ be fixed; if $\ab^{(k)}$ converges to the parameters $\ab$ of an ARMA model solution of Problem~\ref{ARMA_identify}, then $\sigma^2(\ab^{(k)})$ will converge to the  variance of the white noise $\{w(t)\}$ relative to  the same model.  
\end{prop}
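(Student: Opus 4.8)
The plan is to realise $\ab\mapsto\sigma^{2}(\ab)$ as a continuous function near the limit point and then to pass to the limit, using the moment identity \eqref{ARMAforc} at $j=0$ to identify the limiting value with the model's noise variance. Recall from \eqref{GammaSymbol}--\eqref{GammaAB} that, for a fixed numerator vector $\bb$, the sequence $\gammab=\gammab(\ab)$ is the inverse DFT of $\Gamma(\zeta)=b(\zeta)/a(\zeta)$, equivalently $\mathbf{\Gamma}=\Ab^{-1}\Bb$. Since $a(\zeta)=\sum_{j=0}^{n}a_{j}\zeta^{-j}$ is affine in $\ab$, each $\gamma_{k}(\ab)$ is a rational, hence real-analytic, function of $\ab$ on the open set $\mathcal{O}:=\{\ab:\ a(\zeta)\neq 0\ \text{for all}\ \zeta\in\mathbb{T}_{2N}\}$, which contains the set $\mathcal{S}_{n}$ of Schur polynomials. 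Consequently the numerator $\sum_{k=0}^{n}c_{k}a_{k}$ in \eqref{sigmasq} is linear in $\ab$, the denominator $\sum_{k=0}^{n}\gamma_{k}(\ab)b_{k}$ is continuous on $\mathcal{O}$, and $\sigma^{2}(\cdot)$ is continuous at every point of $\mathcal{O}$ where this denominator does not vanish.

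Next I would check that the denominator is nonzero at the limit $\ab$, which lies in $\mathcal{S}_{n}\subset\mathcal{O}$ because it parametrizes a genuine ARMA solution of Problem~\ref{ARMA_identify}. Since $\Tb_{n}>0$ (in particular $c_{0}>0$), the white noise driving \eqref{ARMA_periodic} has strictly positive variance $\sigma^{2}$; otherwise $y\equiv 0$ and $c_{0}=0$. Writing \eqref{ARMAforc} for $j=0$ with $b_{0}=1$ gives $\sum_{k=0}^{n}a_{k}c_{k}=\sigma^{2}\sum_{k=0}^{n}b_{k}\gamma_{k}(\ab)$, so $\sum_{k=0}^{n}b_{k}\gamma_{k}(\ab)=\sigma^{-2}\sum_{k=0}^{n}a_{k}c_{k}$ is finite and nonzero (nonzero because otherwise $\sigma^{2}$ itself would vanish). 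Hence $\ab$ lies in the open neighbourhood on which $\sigma^{2}(\cdot)$ is continuous, and the same relation shows that $\sigma^{2}(\ab)$, as defined by \eqref{sigmasq}, equals exactly the model's noise variance $\sigma^{2}$ — this is precisely the remark preceding the statement.

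Finally, since $\ab^{(k)}\to\ab$ and $\sigma^{2}(\cdot)$ is continuous at $\ab$, we conclude $\sigma^{2}(\ab^{(k)})\to\sigma^{2}(\ab)=\sigma^{2}$, which is the assertion. The $\ab^{(k)}$ eventually lie in the neighbourhood of $\ab$ on which $\sigma^{2}(\cdot)$ is well defined and continuous, so no separate well-posedness argument for the tail of the iteration \eqref{fixed-point_iteration} is needed. I expect the only delicate point to be the non-vanishing of $\sum_{k=0}^{n}\gamma_{k}(\ab)b_{k}$ at the limit; everything else is routine continuity. In fact this non-vanishing is automatic under the standing positivity hypotheses, so the Proposition is a short corollary of the construction of the scaling rule \eqref{sigmasq}, requiring no information about the rate or the monotonicity of the convergence $\ab^{(k)}\to\ab$.
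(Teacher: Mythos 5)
Your argument is correct and matches the paper's: the paper gives no separate proof, deriving the Proposition directly from the observation (stated immediately before it) that the update rule \eqref{sigmasq} is exactly the $j=0$ instance of the Yule--Walker relation \eqref{ARMAforc}, which at a true solution determines $\sigma^2$ uniquely, combined with continuity of $\ab\mapsto\gammab(\ab)$ on the set where $a(\zeta)\neq 0$ on $\Tbb_{2N}$. The only soft spot is your parenthetical justification that $\sum_{k}b_k\gamma_k(\ab)\neq 0$ at the limit --- as written it is circular, since the identity $\sum_k a_kc_k=\sigma^2\sum_k b_k\gamma_k$ only shows the denominator vanishes iff the numerator does --- but this is a point the paper leaves equally implicit, so your proof is, if anything, more explicit than the original.
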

There is an equivalent version of the iteration \eqref{fixed-point_iteration}    where $b_0$ is not normalized and $\sigma^2$ is fixed equal to $1$. This version does  look  like a quasi-Newton method with a fixed stepsize which we do not recommend.  It is well known that a quasi-Newton method with a fixed stepsize, see e.g. (\ref{quasi_Newton}) below, may keep on chattering between two or more values without converging. This behavior was  occasionally encountered in simulations.
\subsection*{Proof of Theorem \ref{propos_quasiNewton}}
The proof follows the lemma stated below.
\begin{lemma}\label{propos_grdt}
	The gradient of $\mathbb{J}_P(\mathbf{a})$ satisfies
	\begin{equation}\label{Gradient1}
	\frac{1}{2}\,\nabla\mathbb{J}_P(\mathbf{a})= \mathbf{T}_n\mathbf{a}-\mathbf{T}_\gamma(\mathbf{a})\mathbf{b} =[\mathbf{T}_n-\mathbf{T}_n(\mathbf{a})] \, \mathbf{a}.
	\end{equation}
	where $\mathbf{T}_n(\mathbf{a})$ is the $(n+1)\times(n+1)$ upper Toeplitz submatrix of the covariance matrix corresponding to the discrete spectral density $\Phi(\zeta):=P(\zeta)/|a(\zeta)|^2$.
\end{lemma}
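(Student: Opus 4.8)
The plan is to compute $\nabla\mathbb{J}_P(\mathbf{a})$ directly from the definition \eqref{func_a} and then identify the resulting expression with the Yule–Walker residual. First I would differentiate the quadratic term: $\nabla_{\mathbf a}(\mathbf{a}^\top\mathbf{T}_n\mathbf{a})=2\mathbf{T}_n\mathbf{a}$, using the symmetry of $\mathbf{T}_n$. Next I would handle the integral term. Writing $a(e^{i\theta})a(e^{-i\theta})=|a(e^{i\theta})|^2=\sum_{j}a_j e^{-ij\theta}\sum_{k}a_k e^{ik\theta}$, the partial derivative with respect to $a_\ell$ gives $\partial_{a_\ell}\log|a|^2=\big(e^{-i\ell\theta}a(e^{i\theta})+e^{i\ell\theta}a(e^{-i\theta})\big)/|a(e^{i\theta})|^2$, and since the measure $d\nu$ and the weight $b(e^{i\theta})b(e^{-i\theta})=P(e^{i\theta})$ (for $P=\mathbf{B}\mathbf{B}^\top$) are real and symmetric under $\theta\mapsto-\theta$, the two terms contribute equally. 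Thus the $\ell$-th component of the gradient of the integral term is $2\int P(e^{i\theta}) e^{-i\ell\theta}/|a(e^{i\theta})|^2\,d\nu$, i.e. twice the $\ell$-th Fourier coefficient of $\Phi=P/|a|^2$. So $\tfrac12\nabla\mathbb{J}_P(\mathbf{a})=\mathbf{T}_n\mathbf{a}-\mathbf{r}$, where $\mathbf{r}$ is the vector whose $\ell$-th entry is that moment.

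The second step is to recognize $\mathbf{r}$ in two ways. On one hand, by \eqref{spec_disc}–\eqref{ARMAforc} the moments of $\sigma^{-2}\Phi$ with $\Phi=\sigma^2|b|^2/|a|^2$ are exactly the cross-terms $\sum_k b_k\gamma_{k-j}$, so setting $\sigma^2=1$ (the unnormalized convention used in \eqref{func_a}, where $b_0$ is free) identifies $\mathbf{r}=\mathbf{T}_\gamma(\mathbf{a})\mathbf{b}$; here I use that $\gamma$ is the inverse DFT of $b/a$ and that $\mathbf{T}_\gamma\mathbf{b}$ has $j$-th entry $\sum_{k=0}^n b_k\gamma_{k-j}$, which is precisely the right-hand side of \eqref{AR_coeff} with $\sigma^2=1$. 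This gives the first equality in \eqref{Gradient1}. On the other hand, $\mathbf{r}$ is by construction the vector of the first $n+1$ covariance lags of the spectral density $\Phi(\zeta)=P(\zeta)/|a(\zeta)|^2$; collecting these lags into the Toeplitz matrix $\mathbf{T}_n(\mathbf{a})$ defined in the statement, the defining difference-equation / Yule–Walker relation for that density (the analog of \eqref{charac_a} read with $P$ in the numerator) gives $\mathbf{r}=\mathbf{T}_n(\mathbf{a})\mathbf{a}$, whence $\tfrac12\nabla\mathbb{J}_P(\mathbf{a})=\mathbf{T}_n\mathbf{a}-\mathbf{T}_n(\mathbf{a})\mathbf{a}=[\mathbf{T}_n-\mathbf{T}_n(\mathbf{a})]\mathbf{a}$, the second equality.

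The main obstacle I anticipate is the bookkeeping around the $\sigma^2$ normalization and making the identity $\mathbf{T}_\gamma(\mathbf a)\mathbf b=\mathbf{T}_n(\mathbf a)\mathbf a$ rigorous in the \emph{circulant} (periodic) setting rather than just formally. Concretely, one must verify that the relation \eqref{ARMAforc}, derived for an actual periodic ARMA process, continues to hold as a pure identity between Fourier coefficients of $b/a$ and of $|b|^2/|a|^2$ for an arbitrary Schur $a$ with $b_0$ not normalized — this is where the discrete spectral factorization subtleties flagged in Section~\ref{Proof} enter, since one needs $a(\zeta)$ to be nonvanishing on $\mathbb{T}_{2N}$ for $\Gamma=\mathbf A^{-1}\mathbf B$ and the density $P/|a|^2$ to be well defined. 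I would discharge this by working directly on $\mathbb{T}_{2N}$ with finite DFT sums (so all "integrals" $\int(\cdot)\,d\nu$ are finite averages over $\zeta\in\mathbb{T}_{2N}$), where the manipulations above are elementary and exact, and by invoking the commutativity of circulants together with \eqref{circ_relation} to read off \eqref{AR_coeff} as the upper-left $(n{+}1)\times(n{+}1)$ block. The differentiation step itself is routine once the symmetry $\theta\mapsto-\theta$ is used to fold the two conjugate contributions together.
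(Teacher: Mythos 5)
Your overall strategy — differentiate \eqref{func_a} directly, fold the two conjugate contributions via the $\theta\mapsto-\theta$ symmetry, then identify the resulting vector both as $\mathbf{T}_\gamma(\mathbf{a})\mathbf{b}$ and as $\mathbf{T}_n(\mathbf{a})\mathbf{a}$ — is exactly the paper's, but the computation goes wrong at the folding step and the error propagates. The derivative of the log term is
\begin{displaymath}
\frac{\partial}{\partial a_\ell}\log\bigl[a(e^{i\theta})a(e^{-i\theta})\bigr]
=\frac{e^{-i\ell\theta}}{a(e^{i\theta})}+\frac{e^{i\ell\theta}}{a(e^{-i\theta})},
\end{displaymath}
so after folding, the $\ell$-th component of the gradient of the integral term is
\begin{displaymath}
2\int_{-\pi}^{\pi}P(e^{i\theta})\,\frac{e^{-i\ell\theta}}{a(e^{i\theta})}\,d\nu
\;=\;2\int_{-\pi}^{\pi}\Phi(e^{i\theta})\,a(e^{-i\theta})\,e^{-i\ell\theta}\,d\nu
\;=\;2\sum_{k}a_k\,\tilde c_{\ell-k}\;=\;2\bigl[\mathbf{T}_n(\mathbf{a})\mathbf{a}\bigr]_\ell,
\end{displaymath}
i.e.\ twice the $\ell$-th Fourier coefficient of $\Phi(\zeta)a(\zeta)$, \emph{not} of $\Phi$ itself. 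In your version the factor $a(e^{i\theta})$ has been dropped from the numerator, so you end up asserting that $\mathbf{r}_\ell=\tilde c_\ell$, the $\ell$-th moment of $\Phi$. That is a different, and wrong, vector: $\tilde{\mathbf{c}}\neq\mathbf{T}_n(\mathbf{a})\mathbf{a}$ and $\tilde{\mathbf{c}}\neq\mathbf{T}_\gamma(\mathbf{a})\mathbf{b}$ for a generic Schur $a$.

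The two identifications you then invoke to land on the correct right-hand sides of \eqref{Gradient1} are both false as stated, and they are needed precisely to compensate for the dropped factor. First, the moments of $\Phi=|b|^2/|a|^2$ are not the cross-terms $\sum_k b_k\gamma_{k-j}$: equation \eqref{ARMAforc} says those cross-terms equal $\sigma^{-2}\sum_k a_k\tilde c_{j-k}$, the $a$-weighted combination of the moments — exactly the missing factor. Second, there is no ``Yule--Walker relation'' $\tilde{\mathbf{c}}=\mathbf{T}_n(\mathbf{a})\mathbf{a}$ for the density $P/|a|^2$; the analog of \eqref{charac_a} has an entirely different right-hand side. The correct route, which is the paper's, is to read the folded term directly as $\sum_k b_k\gamma_{k-\ell}=[\mathbf{T}_\gamma(\mathbf{a})\mathbf{b}]_\ell$ (first equality of \eqref{Gradient1}) and then prove $\mathbf{T}_\gamma(\mathbf{a})\mathbf{b}=\mathbf{T}_n(\mathbf{a})\mathbf{a}$ via the DFT convolution identity $b(\zeta)\,b(\zeta^{-1})/a(\zeta^{-1})=\Phi(\zeta)a(\zeta)$ (second equality). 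Your instinct in the last paragraph to verify everything as exact finite-DFT identities on $\mathbb{T}_{2N}$ is the right one and is how the paper discharges the rigor concern; once the $a(e^{i\theta})$ factor is restored, your argument can be repaired along those lines. (The earlier slip — your formula for $\partial_{a_\ell}\log|a|^2$ is the correct expression evaluated at $-\theta$ — is harmless, since it washes out under the symmetric measure.)
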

\begin{proof}
   A direct computation of the gradient  yields 
   \begin{equation}
   \nabla\mathbb{J}_P(\mathbf{a})=2\mathbf{T}_n\mathbf{a}-\int_{-\pi}^{\pi}b(e^{i\theta})b(e^{-i\theta})\left\lbrace\frac{1}{a(e^{i\theta})}\left[\begin{array}{c}
   1 \\
   e^{-i\theta} \\
   \vdots \\
   e^{-in\theta} \\
   \end{array}\right]+\frac{1}{a(e^{-i\theta})}
   \left[\begin{array}{c}
   1 \\
   e^{i\theta} \\
   \vdots \\
   e^{in\theta} \\
   \end{array}\right]\right\rbrace d\nu,
   \end{equation}
in which the left term of the sum inside the brace can be written as
	\[\frac{b(e^{i\theta})}{a(e^{i\theta})}
	\left[\begin{array}{c}
	1 \\
	e^{-i\theta} \\
	\vdots \\
	e^{-in\theta} \\
	\end{array}\right]
	\left[\begin{array}{cccc}
	1 &	e^{ i\theta} & \dots & e^{ in\theta} \\
	\end{array}\right]
	\left[\begin{array}{c}
	b_0 \\
	b_1 \\
	\vdots \\
	b_n \\
	\end{array}\right],\]
	so that  this part of the integral becomes the sum 
	\[\frac{1}{2N}\sum_{j=-N+1}^{N}\frac{b(\zeta_j)}{a(\zeta_j)}
	\left[\begin{array}{cccc}
	1 &	\zeta_j & \dots & \zeta_j^{n}  \\
	\zeta_j^{-1} & 1 &  & \vdots \\
	\vdots &   & \ddots & \zeta_j  \\
	\zeta_j^{-n} & \dots	& \zeta_j^{-1} & 1 \\
	\end{array}\right]\mathbf{b}
	=\mathbf{T}_\gamma(\mathbf{a})\mathbf{b}\]
	since, as seen in Sect. \ref{CovARMAsect}, 
		\[\frac{1}{2N}\sum_{j=-N+1}^{N}\frac{b(\zeta_j)}{a(\zeta_j)}\zeta_j^{-k}=\gamma_{-k},\quad k=-n,\dots,n.\]
	Computation involving the other term in the integral is similar, yielding  in fact the same result, so that
	\begin{equation}\label{eqn_grdt}
	\nabla\mathbb{J}_P(\mathbf{a})=2[\mathbf{T}_n\mathbf{a}-\mathbf{T}_\gamma(\mathbf{a})\mathbf{b}].
	\end{equation}
	Recall now that the entries of the vector $\mathbf{T}_\gamma(\mathbf{a})\mathbf{b}$ are  the initial  segment of length $n+1$ of the convolution string $\sum_{k=0}^{n}b_k\gamma_{-i+k}  \,;\, i\in \Zbb_{2N}$ whose DFT is the product
\[ b(\zeta) \Frac{b(\zeta^{-1})}{a(\zeta^{-1}) }	= \Frac{b(\zeta) b(\zeta^{-1})}{a(\zeta)a(\zeta^{-1}) } a(\zeta)= \Phi(\zeta) a(\zeta)
\]
which has inverse DFT the first column of $\Sigmab \times \Circ \{a_0,\,a_1,\ldots,a_n,\,0,\ldots,0\}$, $\Sigmab$ being the covariance corresponding to the spectral density $\Phi(\zeta) $. It follows that  in   matrix notation
	\[\mathbf{T}_\gamma(\mathbf{a})\mathbf{b}=\mathbf{T}_n(\mathbf{a})\mathbf{a}\]
which proves \eqref{Gradient1}.	
\end{proof}

If  for simpilicity we normalize to $\sigma^2=1$, then the iteration (\ref{fixed-point_iteration}) can be  written as
	\[\mathbf{T}_n[\mathbf{a}^{(k+1)}-\mathbf{a}^{(k)}]=\mathbf{T}_\gamma(\mathbf{a}^{(k)})\mathbf{b}-\mathbf{T}_n\mathbf{a}^{(k)}=-\frac{1}{2}\nabla\mathbb{J}_P(\mathbf{a}^{(k)})\]
which is  the  quasi-Newton step
	\begin{equation}\label{quasi_Newton}
	\mathbf{a}^{(k+1)}=\mathbf{a}^{(k)}-\frac{1}{2}\mathbf{T}_n^{-1}\nabla\mathbb{J}_P(\mathbf{a}^{(k)}).
	\end{equation}
Introducing instead  the scaling parameter $\sigma^2$,    the recursion  looks   like
	\begin{equation}\label{quasi_Newton_scale}
	\mathbf{a}^{(k+1)}=\sigma^2(\ab^{(k)})\,\left[\mathbf{a}^{(k)}-\frac{1}{2}\mathbf{T}_n^{-1}\nabla\mathbb{J}_P(\mathbf{a}^{(k)})\right].
	\end{equation}

\section{Proof of convergence}\label{Proof}

Before proving convergence of the algorithm, we shall   need to clarify  how the optimization of the functional \eqref{func_a} relates to the solution of  Problem \ref{ARMA_identify}.

The convex optimization approach to determine the denominator $Q(\zeta)$ of a solution to the periodic covariance extension problem   was  reviewed in Sect. \ref{covextsec}.  Theorem \ref{mainthm} gives   the main result. We   want to derive  an equivalent statement regarding the existence and uniqueness of optimal spectral factors $a(\zeta)$ of $Q(\zeta)$. There is a difficulty here since, as remarked in the appendix, $Q(\zeta)$ admits spectral factors if and only if its extension $Q(z)$ to the unit circle does so, but Theorem \ref{mainthm} only states that the optimal  $\hat Q(\zeta)$ is positive on the discrete set $\Tbb_{2N}$; said in other words, only belongs to $\mathfrak{P}_+(N)$ but not necessarily to $\mathfrak{P}_+$.
However (Lemma \ref{positivity}) if $N$ is  large, the extension $Q(z)$ will be positive. On the other hand, the key sufficient condition $\cb\in\mathfrak{C}_+(N)$ of Theorem \ref{mainthm} also requires $N$ to be larger than some $N_0$  \cite[Proposition 6]{LPcirculant}.  

We shall henceforth say that Problem \ref{ARMA_identify} is {\em feasible} if $N\geq N_0$ is large enough to guarantee that $\hat{Q}(\zeta)$ has a positive extension to the unit circle.

For a feasible problem a polynomial spectral factorization   $\hat{Q}(\zeta)=\hat{a}(\zeta)\hat{a}(\zeta^{-1})$ exists and  can  be computed  exactly  as   in the $z$-domain. See   Section \ref{sec_factorizability} of the appendix for more details. Clearly,  feasibility depends on the data, in particular on  the polynomial  $P(\zeta)$ which should be chosen so as to admit  an extension which is also positive on the whole circle; i.e. lies in $\mathfrak{P}_+$ and hence also admits  spectral factors.

All solutions of a feasible  Problem \ref{ARMA_identify}   can then be identified with nonsingular and   invertible spectral  factors of the solution \eqref{Phi=P/Q}.

Similar to what is done in \cite{En} for the ordinary covariance extension on $\Zbb$, we have reparameterized the functional  (\ref{dual}) in terms of the outer spectral factors $a(\zeta)$. Consider to this end the map 
\[T:\mathcal{S}_n\to\mathfrak{P}_+,\quad a(\zeta)\mapsto T(a):=a(\zeta)a(\zeta^{-1}).\]
By the same argument as in \cite[Section 3]{En}, one can show that the map $T$ is bijective, continuously differentiable and has a nonvanishing Jacobian for all $a(\zeta)\in\mathcal{S}_n$. Therefore, a change of variables from $Q$ to $\mathbf{a}=[a_0,a_1,\dots,a_n]^\top$ in (\ref{dual}) is well defined, and the optimization of \eqref{dual}  can be transformed into     that of  Theorem \ref{propos_quasiNewton}   with $P(\zeta)=b(\zeta)b(\zeta^{-1})$; i.e.
\begin{equation}\label{opt_a}
\min_{\quad \ab\,;\, a(\zeta)\in\mathcal{S}_n\,}\, \mathbb{J}_P(\mathbf{a})  \; 
\end{equation}
where $\mathbb{J}_P(\mathbf{a})$ was defined in \eqref{func_a}.      Note, as observed by \cite{En},  that unlike the original optimization problem, the feasible set $\mathcal{S}_n$ is not convex in the $\mathbf{a}$-parameterization; nevertheless  the smooth bijection $T$ will map minima into minima.

\begin{thm}\label{periodThm}
Assume feasibility; then the optimization problem $(\ref{opt_a})$ has a unique stationary point   $\hat{\mathbf{a}}$ in $\mathcal{S}_n$ such that $T(\hat{a})=\hat{Q}(\zeta)$. The function $\mathbb{J}_P(\mathbf{a})$ is locally strictly convex in a neighborhood of such~$\hat{\mathbf{a}}$ which is indeed a minimum.  
\end{thm}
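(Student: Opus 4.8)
The whole statement follows by recognizing that, in the $\mathbf{a}$‑coordinates, $\mathbb{J}_P(\mathbf{a})$ in \eqref{func_a} is \emph{literally} the dual functional \eqref{dual} of Theorem~\ref{optthm} with $P(\zeta)=b(\zeta)b(\zeta^{-1})$, read through the change of variables $Q=T(a)=a(\zeta)a(\zeta^{-1})$. Indeed, writing $Q(\zeta)=a(\zeta)a(\zeta^{-1})=\sum_{k=-n}^{n}q_k\zeta^{-k}$ one sees that $q_k$ is the autocorrelation sequence of $\mathbf{a}$, so that $\langle\cb,\qb\rangle=\sum_{k=-n}^{n}c_kq_k=\sum_{i,j=0}^{n}a_ic_{i-j}a_j=\mathbf{a}^\top\mathbf{T}_n\mathbf{a}$, while $\log Q(e^{i\theta})=\log[a(e^{i\theta})a(e^{-i\theta})]$; hence $\mathbb{J}_P(\mathbf{a})=\mathbb{J}_P(T(a))$ for every $a\in\mathcal{S}_n$. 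Thus the functional in \eqref{opt_a} is the composition of the strictly convex dual functional on $\mathfrak{P}_+(N)$ with the smooth bijection $T:\mathcal{S}_n\to\mathfrak{P}_+$, whose Jacobian $J_T(\mathbf{a})$ is everywhere nonsingular, as recalled just before the statement following \cite{En}.

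\textbf{Stationary points.} First I would determine the stationary points. By the chain rule, $\nabla_{\mathbf{a}}\mathbb{J}_P(\mathbf{a})=J_T(\mathbf{a})^\top\,\nabla_{\mathbf{q}}\mathbb{J}_P(Q)\big|_{Q=T(a)}$, and since $J_T(\mathbf{a})$ is nonsingular on $\mathcal{S}_n$, $\mathbf{a}$ is stationary for $\mathbb{J}_P$ iff $Q=T(a)$ is stationary for the dual functional. Because $\mathbb{J}_P(Q)$ is strictly convex on the convex cone $\mathfrak{P}_+(N)$, its only stationary point there is the minimizer $\hat Q$ of Theorem~\ref{optthm}; the feasibility hypothesis guarantees $\hat Q\in\mathfrak{P}_+$, and $\mathfrak{P}_+$ is open in $\mathfrak{P}_+(N)$, so $\hat Q$ is also the unique stationary point of $\mathbb{J}_P$ restricted to $\mathfrak{P}_+$. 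Since $T$ is a bijection onto $\mathfrak{P}_+$, the unique stationary point of \eqref{opt_a} is $\hat{\mathbf{a}}=T^{-1}(\hat Q)$, i.e. the outer spectral factor of $\hat Q$, which is the first assertion. (Equivalently, by Lemma~\ref{propos_grdt}, $\nabla\mathbb{J}_P(\hat{\mathbf{a}})=0$ reads $\mathbf{T}_n(\hat{\mathbf{a}})\hat{\mathbf{a}}=\mathbf{T}_n\hat{\mathbf{a}}$, i.e. the first $n+1$ covariance lags of $P/|\hat a|^2$ reproduce the data, which is precisely \eqref{momentconditions} — so $\hat{\mathbf{a}}$ solves Problem~\ref{ARMA_identify}.)

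\textbf{Local convexity and minimality.} Differentiating once more and evaluating at $\hat{\mathbf{a}}$, where $\nabla_{\mathbf{q}}\mathbb{J}_P(\hat Q)=0$, the contribution of the second derivative of $T$ drops out and one is left with
\[
\nabla^2_{\mathbf{a}}\mathbb{J}_P(\hat{\mathbf{a}})=J_T(\hat{\mathbf{a}})^\top\,\nabla^2_{\mathbf{q}}\mathbb{J}_P(\hat Q)\,J_T(\hat{\mathbf{a}}).
\]
Now $\nabla^2_{\mathbf{q}}\mathbb{J}_P(Q)=\int_{-\pi}^{\pi}\frac{P(e^{i\theta})}{Q(e^{i\theta})^2}\,\mathbf{u}(\theta)\mathbf{u}(\theta)^\top\,d\nu$ with $\mathbf{u}(\theta)=(1,2\cos\theta,\dots,2\cos n\theta)^\top$ is a Gram‑type matrix: a sum of rank‑one positive semidefinite matrices with strictly positive weights $P(\zeta_j)/Q(\zeta_j)^2$, hence positive definite as soon as the sampled vectors $\mathbf{u}(\theta_j)$ span $\mathbb{R}^{n+1}$, which holds because $P>0$ on $\mathbb{T}_{2N}$ and $N>n$ (a Chebyshev–Vandermonde argument) — this incidentally re‑proves the strict convexity claimed in Theorem~\ref{optthm}. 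Consequently $\nabla^2_{\mathbf{a}}\mathbb{J}_P(\hat{\mathbf{a}})\succ0$, and by continuity of the Hessian (all ingredients are $C^2$ on the open set $\mathcal{S}_n$) it stays positive definite on a neighborhood $U$ of $\hat{\mathbf{a}}$; therefore $\mathbb{J}_P$ is strictly convex on $U$, and being a stationary point with positive definite Hessian, $\hat{\mathbf{a}}$ is a strict local minimum. That $\hat{\mathbf{a}}$ is the global minimum over $\mathcal{S}_n$ then follows from it being the only stationary point together with properness: as $\mathbf{a}\to\partial\mathcal{S}_n$ a zero of $a$ approaches the unit circle, $T(a)\to\partial\mathfrak{P}_+$, and $\mathbb{J}_P(\mathbf{a})=\mathbb{J}_P(T(a))\to+\infty$ there, as established in the dual analysis of \cite{LPcirculant}.

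\textbf{Main obstacle.} The only genuinely delicate point is upgrading the strict convexity of the dual to a \emph{positive definite} Hessian at $\hat Q$ (strict convexity alone is not enough, cf.\ $x\mapsto x^4$ at $0$); this is why I would compute $\nabla^2_{\mathbf{q}}\mathbb{J}_P$ explicitly and invoke $N>n$, and also why I would check that $\mathbb{J}_P$ is genuinely $C^2$ on $\mathcal{S}_n$ — which is where the feasibility assumption, ensuring $\hat a$ stays away from the discrete‑factorization subtleties of the appendix, is used. The remaining steps — the two chain‑rule identities and the vanishing of the curvature term of $T$ at a stationary point — are routine once the identification $\mathbb{J}_P(\mathbf{a})=\mathbb{J}_P(T(a))$ has been made.
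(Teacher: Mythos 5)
Your proof is correct and follows the same overall architecture as the paper's: the identification $\mathbb{J}_P(\mathbf{a})=\mathbb{J}_P(T(a))$, the chain rule giving $\nabla\mathbb{J}_P(\mathbf{a})=\mathbf{J}^\top\nabla\mathbb{J}_P(Q)$ with nonsingular Jacobian, and the Hessian congruence $\nabla^2\mathbb{J}_P(\hat{\mathbf{a}})=\mathbf{J}^\top\nabla^2\mathbb{J}_P(\hat Q)\mathbf{J}$ at the stationary point. The differences are in where the work is placed. For the first assertion the paper simply cites Propositions 3.3 and 3.6 of \cite{En}, whereas you give a self-contained argument (stationary points transfer bijectively under $T$, and the strictly convex dual has a unique one, namely $\hat Q$, lying in $\mathfrak{P}_+$ by feasibility); your parenthetical identification of $\nabla\mathbb{J}_P(\hat{\mathbf{a}})=0$ with the moment conditions via Lemma~\ref{propos_grdt} matches the paper's Lemma exactly. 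For the Hessian, the paper computes the residual $\mathbf{R}=c_0\mathbf{I}+\mathbf{T}_n-[\tilde c_0\mathbf{I}+\mathbf{T}_n(\mathbf{a})]$ explicitly in covariance form and observes it vanishes because $\mathbf{T}_n(\hat{\mathbf{a}})=\mathbf{T}_n$; you instead note abstractly that the curvature term of $T$ multiplies $\nabla\mathbb{J}_P(\hat Q)=0$ — these are the same fact, since that gradient is $\mathbf{c}-\tilde{\mathbf{c}}$. The one place where you go genuinely beyond the paper is in proving $\nabla^2_{\mathbf{q}}\mathbb{J}_P(\hat Q)\succ 0$ via the Gram/Chebyshev--Vandermonde argument (using $N>n$ and $P>0$ on $\mathbb{T}_{2N}$): the paper asserts positive definiteness directly from the strict convexity claimed in Theorem~\ref{optthm}, and, as you rightly point out, strict convexity alone does not force a nondegenerate Hessian, so your explicit computation closes a small logical gap rather than merely duplicating the argument. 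Your closing remark on properness as $\mathbf{a}\to\partial\mathcal{S}_n$ addresses global minimality, which neither the theorem statement nor the paper's proof actually requires, but it does no harm.
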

\begin{proof}The proof of the first part is essentially the same as that of  of Propositions 3.3 and 3.6 in \cite{En} and will not be repeated here.
The proof of local strict convexity will follow by showing that the Hessian of  $\mathbb{J}_P(\mathbf{a})$ is positive definite at $\hat{\ab}$ which is the content of  the following lemma.  

\begin{lemma}
The gradient of $\mathbb{J}_P(\mathbf{a})$ is related to $\nabla\mathbb{J}_P(Q)$ by the formula
\begin{equation}\label{grdt_relate}
\nabla\mathbb{J}_P(\mathbf{a})=\mathbf{J}^\top\nabla\mathbb{J}_P(Q)
\end{equation}
where $\mathbf{J}$ is the  Jacobian $\left [\frac{\partial q_i}{\partial a_j}\right]_{i,j=0,\ldots,n}$ which has the expression
\[\mathbf{J}=
\left[\begin{array}{ccccc}
a_0 & a_1 & \dots & a_n \\
a_1 & \dots & a_n & 0 \\
\vdots & \iddots &  & \vdots \\
a_n & 0 & \dots & 0
\end{array}\right]+
\left[\begin{array}{ccccc}
a_0 & a_1 & \dots & a_n \\
0 & a_0 & \dots & a_{n-1} \\
\vdots &  & \ddots & \vdots \\
0 & \dots & 0 & a_0
\end{array}\right] 
\]
and is non-singular, while the Hessian of $\mathbb{J}_P(\mathbf{a})$ is related to $\nabla^2\mathbb{J}_P(Q)$ by
\[\nabla^2\mathbb{J}_P(\mathbf{a})=\mathbf{J}^\top\nabla^2\mathbb{J}_P(Q)\mathbf{J}+\mathbf{R}\]
where
\begin{equation}\label{Hess_resi}
\mathbf{R}=c_0\mathbf{I}+\mathbf{T}_n-[\tilde{c}_0\mathbf{I}+\mathbf{T}_n(\mathbf{a})]
\end{equation}
with $\mathbf{T}_n(\mathbf{a})$ defined in Lemma $\ref{propos_grdt}$.
\end{lemma}
\begin{proof}
By the chain rule one has
\begin{equation}
\frac{d\mathbb{J}_P(\mathbf{a})}{d\mathbf{a}}=\frac{d\mathbb{J}_P(Q)}{d\mathbf{q}}\frac{d\mathbf{q}}{d\mathbf{a}}=\nabla\mathbb{J}_P(Q)^\top\mathbf{J}.
\end{equation}
Using the convention of writing the gradient as a column vector, one obtains (\ref{grdt_relate}) by taking a transpose.
The expression for $\mathbf{J}$ is a consequence of spectral factorization by matching the coefficients, as indicated by the quadratic equation (\ref{coeff_mathch}) in the appendix.

For the second statement, applying the chain rule and product rule for the derivative, one has
\[\nabla^2\mathbb{J}_P(\mathbf{a})=\mathbf{J}^\top\nabla^2\mathbb{J}_P(Q)\mathbf{J}+\left[\frac{d}{da}\mathbf{J}^\top\right]\nabla\mathbb{J}_P(Q).\]
It is not difficult to check that the $j$'th column of $\mathbf{R}$ is given by
\[\left[\frac{\partial}{\partial a_j}\mathbf{J}^\top\right]\nabla\mathbb{J}_P(Q)
=\left[\frac{\partial}{\partial a_j}\mathbf{J}^\top\right](\mathbf{c}-\tilde{\mathbf{c}}),\quad j=0,1,\dots,n,\]
which   leads  precisely to the structure in (\ref{Hess_resi}).
\end{proof}

Now, at the minimum $\hat{Q}$, the Hessian of the convex  functional 	$\mathbb{J}_P(Q)$ 	is positive definite while the matrix  $\Rb$ in  \eqref{Hess_resi} is clearly zero for $\ab=\hat{\ab}$ as $\mathbf{T}_n(\hat{\mathbf{a}})=\mathbf{T}_n$. Therefore $\nabla^2\mathbb{J}_P( \hat{\ab})$ is positive definite.
\end{proof}

We now propose an iterative algorithm to compute the spectral factor $a(\zeta)$ for the CRCEP. Note that the  iteration $(\ref{fixed-point_iteration})$ does not necessarily respect the constraint in the optimization problem (\ref{opt_a}). Hence a spectral factorization to extract the outer spectral factor may be  necessary to ensure   feasibility at each step. This  is similar to the projection step onto the feasible set in the projected Newton's method \cite{Be,Du} for constrained convex optimization.

\noindent{\bf Algorithm}\; [\,Quasi-Newton descent  with spectral factorization]\label{alg_prd}  
	\begin{enumerate}
		\item Initialize $\mathbf{a}^{(0)}=\mathbf{a}_{\mathrm{ME}}$, e.g. as the output of the Levinson algorithm for the ordinary covariance extension. Set a threshold $\delta$   to decide convergence\;
		
		\item Iterate $\mathbf{a}^{(k+1)}=\sigma^2(\mathbf{a}^{(k)})\,\mathbf{T}_n^{-1}\mathbf{T}_\gamma(\mathbf{a}^{(k)})\mathbf{b}$\;
			
		\item Do spectral factorization  $a^{(k+1)}(z)a^{(k+1)}(z^{-1})$  to get the outer spectral factor \;
		
		\item If $\|\mathbf{a}^{(k+1)}-\mathbf{a}^{(k)}\|>\delta$, go to step 2.
	\end{enumerate}

\begin{thm}
	  The quasi-Newton descent algorithm   converges locally to a vector of  AR coefficients    $\hat{\mathbf{a}}\in\mathcal{S}_n$, defining  a periodic ARMA process $(\ref{ARMA_periodic})$  whose covariance matrix is a circulant extension of the data $\mathbf{T}_n$, i.e.  solves Problem \ref{ARMA_identify}.
\end{thm}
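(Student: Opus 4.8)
The plan is to show two things: (i) the fixed points of the iteration in Step~2 of the Algorithm (followed, when needed, by the spectral factorization of Step~3) are exactly the unique minimizer $\hat{\mathbf{a}}\in\mathcal{S}_n$ of $\mathbb{J}_P$ furnished by Theorem~\ref{periodThm}, and that this $\hat{\mathbf{a}}$ solves Problem~\ref{ARMA_identify}; and (ii) near $\hat{\mathbf{a}}$ the iteration is a local contraction, so the iterates converge to it. For (i), recall from the computation preceding \eqref{quasi_Newton} that, after normalizing $\sigma^2=1$ (legitimate at a fixed point, since by \eqref{sigmasq} we have $\sigma^2(\hat{\mathbf{a}})=1$ there), the increment in \eqref{fixed-point_iteration} equals $-\tfrac12\mathbf{T}_n^{-1}\nabla\mathbb{J}_P(\mathbf{a}^{(k)})$, so a fixed point is precisely a point with $\nabla\mathbb{J}_P=0$; Theorem~\ref{periodThm} identifies this with the unique stationary point $\hat{\mathbf{a}}$, at which $T(\hat a)=\hat Q\in\mathfrak{P}_+$ by feasibility. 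Since $\hat a$ is then already the outer Schur factor of $\hat Q$, Step~3 leaves $\hat{\mathbf{a}}$ fixed; and because $\mathcal{S}_n$ is open with $\hat{\mathbf{a}}$ in its interior, Step~3 is a no-op for every iterate sufficiently close to $\hat{\mathbf{a}}$, so locally the algorithm coincides with the smooth map $\Psi(\mathbf{a})=\mathbf{a}-\tfrac12\mathbf{T}_n^{-1}\nabla\mathbb{J}_P(\mathbf{a})=\mathbf{T}_n^{-1}\mathbf{T}_\gamma(\mathbf{a})\mathbf{b}$, up to the factor $\sigma^2(\mathbf{a})\to1$.

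At $\hat{\mathbf{a}}$ one has $\mathbf{T}_n(\hat{\mathbf{a}})=\mathbf{T}_n$: from $\nabla\mathbb{J}_P(\hat{\mathbf{a}})=0$ and \eqref{Gradient1} one gets $[\mathbf{T}_n-\mathbf{T}_n(\hat{\mathbf{a}})]\hat{\mathbf{a}}=0$, but more transparently $\hat\Phi=P/\hat Q$ is, by Theorems~\ref{optthm} and \ref{mainthm}, the unique rational function satisfying the moment conditions \eqref{momentconditions}, i.e. its first $n{+}1$ covariance lags are $c_0,\dots,c_n$. As $\hat\Phi$ is the spectral density of the periodic ARMA model \eqref{ARMA_periodic} built from $\hat{\mathbf{a}},\mathbf{b}$ with unit noise variance, its circulant covariance is a circulant extension of $\mathbf{T}_n$, so $\hat{\mathbf{a}}$ solves Problem~\ref{ARMA_identify}. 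It therefore remains only to prove $\mathbf{a}^{(k)}\to\hat{\mathbf{a}}$.

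For local convergence it suffices that the Jacobian $D\Psi(\hat{\mathbf{a}})=\mathbf{I}-\tfrac12\mathbf{T}_n^{-1}\nabla^2\mathbb{J}_P(\hat{\mathbf{a}})$ has spectral radius $<1$: a small closed ball $\bar B\subset\mathcal{S}_n$ around $\hat{\mathbf{a}}$ is then forward-invariant and $\Psi$ contracts on it in a suitable norm, giving geometric convergence from any initialization in $\bar B$ (the $\sigma^2(\mathbf{a})$-rescaling perturbs $D\Psi(\hat{\mathbf{a}})$ only by the rank-one term $\hat{\mathbf{a}}\,\nabla\sigma^2(\hat{\mathbf{a}})^\top$, which must still be checked not to raise the spectral radius to $1$). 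Now $\mathbf{T}_n\succ0$ and, by Theorem~\ref{periodThm}, $\nabla^2\mathbb{J}_P(\hat{\mathbf{a}})\succ0$, so $\mathbf{T}_n^{-1}\nabla^2\mathbb{J}_P(\hat{\mathbf{a}})$ is similar to the symmetric positive definite matrix $\mathbf{T}_n^{-1/2}\nabla^2\mathbb{J}_P(\hat{\mathbf{a}})\mathbf{T}_n^{-1/2}$; hence its eigenvalues $\mu_1,\dots,\mu_{n+1}$ are real and positive, those of $D\Psi(\hat{\mathbf{a}})$ are $1-\mu_i/2$, and $\rho(D\Psi(\hat{\mathbf{a}}))<1$ is equivalent to $\mu_i<4$ for all $i$, i.e. to $\nabla^2\mathbb{J}_P(\hat{\mathbf{a}})\prec4\mathbf{T}_n$. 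To get this I would use the lemma in the proof of Theorem~\ref{periodThm}: at $\hat{\mathbf{a}}$ the residual $\mathbf{R}$ in \eqref{Hess_resi} vanishes, so $\nabla^2\mathbb{J}_P(\hat{\mathbf{a}})=\hat{\mathbf{J}}^\top\nabla^2\mathbb{J}_P(\hat Q)\hat{\mathbf{J}}$ with $\nabla^2\mathbb{J}_P(Q)=\int P(e^{i\theta})Q(e^{i\theta})^{-2}\mathbf{u}(\theta)\mathbf{u}(\theta)^\top d\nu$, $\mathbf{u}(\theta)=(1,2\cos\theta,\dots,2\cos n\theta)^\top$, and $\hat{\mathbf{J}}$ the explicit matrix built from $\hat{\mathbf{a}}$; since $\mathbf{T}_n$ is itself the $(n{+}1)$-truncation of the circulant covariance with symbol $\hat\Phi=P/\hat Q$, the inequality $\mathbf{v}^\top\nabla^2\mathbb{J}_P(\hat{\mathbf{a}})\mathbf{v}\le 4\,\mathbf{v}^\top\mathbf{T}_n\mathbf{v}$ reduces — via $\hat{\mathbf{J}}\mathbf{v}$ corresponding on $\mathbb{T}_{2N}$ to $\hat a(\zeta)$ times the trigonometric polynomial associated with $\mathbf{v}$, together with Parseval — to a Cauchy--Schwarz quadrature comparison over $\mathbb{T}_{2N}$. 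A more robust alternative is to use $\mathbb{J}_P$ itself as a Lyapunov function: on a compact sublevel set around $\hat{\mathbf{a}}$ on which $\nabla^2\mathbb{J}_P\prec4\mathbf{T}_n$, the quasi-Newton step strictly decreases $\mathbb{J}_P$ unless already at $\hat{\mathbf{a}}$, so $\mathbb{J}_P(\mathbf{a}^{(k)})$ decreases to a limit, every subsequential limit of $\{\mathbf{a}^{(k)}\}$ is a stationary point in that set hence equals $\hat{\mathbf{a}}$, and local strict convexity upgrades this to convergence of the whole sequence.

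The main obstacle is exactly the upper estimate $\nabla^2\mathbb{J}_P(\hat{\mathbf{a}})\prec4\mathbf{T}_n$ (equivalently $\rho(D\Psi(\hat{\mathbf{a}}))<1$). Its positivity counterpart $\nabla^2\mathbb{J}_P(\hat{\mathbf{a}})\succ0$ is handed to us by Theorem~\ref{periodThm}, but the upper bound is delicate because $\mathbf{T}_n$ is only the finite Toeplitz truncation of the circulant covariance of $\hat\Phi$, so the comparison with $\hat{\mathbf{J}}^\top\nabla^2\mathbb{J}_P(\hat Q)\hat{\mathbf{J}}$ cannot be done coefficient by coefficient and requires the careful quadrature argument above. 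Secondary difficulties are the rank-one correction arising from the adaptive scaling $\sigma^2(\mathbf{a})$, and, underlying the whole scheme, the fact — guaranteed here only by the feasibility hypothesis $N\ge N_0$ (cf. Lemma~\ref{positivity}) — that $\hat Q$ extends to an element of $\mathfrak{P}_+$, so that the spectral factorization in Step~3 is well defined.
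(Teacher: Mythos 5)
Your overall architecture is the same as the paper's: identify the fixed points of \eqref{fixed-point_iteration} with stationary points of $\mathbb{J}_P$ via the quasi-Newton rewriting \eqref{quasi_Newton}, invoke local strict convexity near $\hat{\mathbf{a}}$ (Theorem~\ref{periodThm}), and then check that the limit solves Problem~\ref{ARMA_identify}. Where you differ is in that last verification. The paper does not route it back through Theorems~\ref{mainthm}--\ref{optthm}; it argues directly: $\mathbf{T}_n\hat{\mathbf{a}}$ is rewritten as $\mathbf{M}(\hat{\mathbf{a}})\mathbf{c}$ with $\mathbf{M}(\hat{\mathbf{a}})$ the Jury matrix, whose determinant $\prod_{i,j}(1-r_ir_j)$ is nonzero because $\hat a$ is Schur; the linear system $\mathbf{M}(\hat{\mathbf{a}})\hat{\mathbf{c}}=\sigma^2\mathbf{T}_\gamma(\hat{\mathbf{a}})\mathbf{b}$ is then solved both by the data $\mathbf{c}$ (stationarity of $\mathbb{J}_P$) and by the first $n+1$ covariance lags of the ARMA process \eqref{ARMA_periodic} (this is \eqref{ARMAforc}), and nonsingularity forces them to coincide. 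Your alternative---using Theorem~\ref{periodThm} to identify $T(\hat a)=\hat Q$ and then Theorem~\ref{optthm} for the moment conditions---is legitimate and arguably cleaner, though note that your intermediate remark $[\mathbf{T}_n-\mathbf{T}_n(\hat{\mathbf{a}})]\hat{\mathbf{a}}=0$ does not by itself yield $\mathbf{T}_n(\hat{\mathbf{a}})=\mathbf{T}_n$; some version of the Jury-matrix or variational argument is genuinely needed there, and you correctly fall back on the latter.

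The genuine gap is in the convergence claim itself, and you have named it precisely without closing it: positivity of $\nabla^2\mathbb{J}_P(\hat{\mathbf{a}})$ makes $\hat{\mathbf{a}}$ an attracting fixed point of the fixed-step map $\mathbf{a}\mapsto\mathbf{a}-\tfrac12\mathbf{T}_n^{-1}\nabla\mathbb{J}_P(\mathbf{a})$ only if additionally $\nabla^2\mathbb{J}_P(\hat{\mathbf{a}})\prec4\mathbf{T}_n$, and neither of your two sketches (the quadrature/Cauchy--Schwarz comparison, the Lyapunov descent on a sublevel set) actually produces this estimate. To be fair, the paper's own proof is no stronger on this point: it asserts convergence ``by local strict convexity,'' which is exactly the insufficient premise you diagnose. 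There is also the scaling issue you half-flag but do not resolve: with $\sigma^2(\mathbf{a})$ given by \eqref{sigmasq}, the map $\mathbf{a}\mapsto\sigma^2(\mathbf{a})\,\mathbf{T}_n^{-1}\mathbf{T}_\gamma(\mathbf{a})\mathbf{b}$ is homogeneous of degree one in $\mathbf{a}$ (since $\mathbf{T}_\gamma(\lambda\mathbf{a})=\lambda^{-1}\mathbf{T}_\gamma(\mathbf{a})$ and $\sigma^2(\lambda\mathbf{a})=\lambda^2\sigma^2(\mathbf{a})$), so the entire ray $\{\lambda\hat{\mathbf{a}}\}$ consists of fixed points and the Jacobian at $\hat{\mathbf{a}}$ has eigenvalue exactly $1$ in the direction $\hat{\mathbf{a}}$; your ``rank-one correction'' therefore does push the spectral radius to $1$, and local convergence can only be asserted modulo normalization (all points of the ray define the same spectral density, so the theorem's conclusion survives, but the contraction argument as stated does not). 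In short: your proposal correctly isolates what must be proved and is more candid than the paper about it, but the key estimate remains unestablished in both.
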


\begin{proof} By local strict convexity of $\mathbb{J}_P(\mathbf{a})$, assuming  $\mathbf{a}^{(k)}$ is close enough to the minimum, the algorithm will converge to a point $\hat{\ab}$   satisfying the equation  
\[	\frac{1}{2}\,\nabla\mathbb{J}_P(\hat{\ab})= \mathbf{T}_n\hat{\ab}-\mathbf{T}_\gamma(\hat{\ab})\mathbf{b} =0\,.
\]	
To check that $\hat{\ab}$ solves the moment equations, just note that  the term $\mathbf{T}_n\hat{\mathbf{a}}$ can  be written as $\mathbf{M}(\hat{\mathbf{a}})\mathbf{c}$, where
\[\mathbf{M}(\mathbf{a}) =\left[\begin{array}{ccccc}
	a_0 & a_1 & a_2 & \dots & a_n \\
	a_1 & a_2 & \dots & a_n & 0 \\
	a_2 & \dots & a_n & 0 & 0 \\
	\vdots & \iddots & \iddots &  & \vdots \\
	a_n & 0 & \dots & 0 & 0
	\end{array}\right]+
	\left[\begin{array}{ccccc}
	0 & 0 & 0 & \dots & 0 \\
	0 & a_0 & 0 & \dots & 0 \\
	0 & a_1 & a_0 & 0 & 0 \\
	\vdots & \vdots & \ddots & \ddots & \vdots \\
	0 & a_{n-1} & \dots & a_1 & a_0
	\end{array}\right],\]
is the so-called {\em Jury matrix} mentioned in \cite{DM} whose determinant is
	\[\prod_{i=1}^{n}\prod_{j=1}^{n}(1-r_ir_j),\]
	where $r_i$ is the $i$'th root of the polynomial $a(z)$. Hence $\mathbf{M}(\hat{\mathbf{a}})$ is   nonsingular if $\hat{a}(z)$ is a Schur polynomial. 
Consider  then the equation in the unknown $\hat{\mathbf{c}}$
	\begin{equation}
	\mathbf{M}(\hat{\mathbf{a}})\hat{\mathbf{c}}=\sigma^2\mathbf{T}_\gamma(\hat{\mathbf{a}})\mathbf{b}.
	\end{equation}
which is just the matrix version of \eqref{ARMAforc} with $\ab=\hat{\ab}$ and the corresponding $\gamma= \gamma(\hat{\ab}) $ fixed. This is a linear equation which 	has as unique solution the vector $\hat{\mathbf{c}}=\mathbf{c}$, whose components are exactly the first $n+1$ covariance lags of the periodic ARMA process (\ref{ARMA_periodic}).
\end{proof}

\section{Generalization to the vector case}\label{SecVector}

The results obtained in the previous sections are for   scalar processes.  Their generalization   to   multidimensional processes however does not require a special treatment   but only involves a rather straightforward change of notations. In this section, we shall restrict our attention to periodic processes, as this setting is relevant to our application to smoothing. However, extension to vector-valued processes defined on $\mathbb{Z}$ is also straightforward. 

\subsection{Multidimensional CRCEP and the vector ARMA model}
Following \cite{LMPcirculant}, we present below a formulation of the multivariable circulant extension problem where $P(\zeta)$ is a scalar polynomial.  
\begin{prob}\label{CRCEP_vec}
	  Suppose we are given a scalar pseudo-polynomial
	\begin{equation}
	P(\zeta)=\sum_{k=-n}^{n}p_k\zeta^{-k}
	\end{equation}  
	which takes positive values on the discrete unit circle $\mathbb{T}_{2N}:=\{\zeta_k=e^{ik\pi/N}\ |\ k=-N+1,\dots,N\}$
	and covariance matrices $C_k:=\mathbb{E}\{y(t+k)y(t)^\top\}\in\mathbb{R}^{m\times m},\ k=0,1,\dots,n,$ of a certain stationary process $\{y(t)\}$,  such that the block-Toeplitz matrix
	\begin{equation}\label{MToepl}
	\mathbf{T}_n=\bmat
	C_0 & C_1 & \dots & C_n \\
	C_1^\top & C_0 &   & \vdots \\
	\vdots &   & \ddots & C_1  \\
	C_n^\top & \dots & C_1^\top & C_0   \emat	
\end{equation}	is positive definite. We want to determine an $m\times m$ matrix pseudo-polynomial 
	\begin{equation}
	Q(\zeta)=\sum_{k=-n}^{n}Q_k\zeta^{-k}
	\end{equation} such that $\Phi(\zeta):=Q(\zeta)^{-1}P(\zeta)$ is a   spectral density satisfying the moment conditions
	\[\int_{-\pi}^{\pi}e^{ik\theta}\Phi(e^{i\theta})d\nu=C_k,\quad k=0,1,\dots,n.\]
\end{prob}
In the ARMA formulation, we consider an $m$-dimensional stationary periodic process $\{y(t)\}$ described by an innovation  unilateral ARMA model driven by a $m$-dimensional  white noise $\{w(t)\}$
\begin{equation}\label{ARMA_prd_vec}
	\sum_{k=0}^{n}A_ky(t-k)=\sum_{k=0}^{n}b_kw(t-k),\quad t\in\mathbb{Z}_{2N}
\end{equation}
where $\{A_k\in\mathbb{R}^{m\times m}\}$ and $\{b_k\in\mathbb{R}\}$ are   coefficients of matrix Schur polynomials\footnote{Here the property of  a matrix polynomial  in the indeterminate $z^{-1}$  being Schur    is that all the roots of
$z^{mn}\det A(z)=0$ should lie inside the unit disc.} and $\mathbb{E}[w(t)w(t)^\top]=D>0$. Similar to the scalar case, we need to impose the periodic boundary condition (\ref{BC_prd}).
Introducing  block vector notations
\[  \yb=\bmat y(-N+1) \\ \vdots \\ y(N)\emat, \qquad
 \mathbf{w}=\bmat w(-N+1) \\ \vdots \\ w(N) \emat  \quad \in\mathbb{R}^{2mN}\]
and letting $\mathbf{D}:=\mathbb{E}\{\mathbf{w}\mathbf{w}^\top\}=I_{2N}\otimes D$, where $\otimes$ denotes the Kronecker product, then (\ref{ARMA_prd_vec}) can be written compactly as a matrix-vector equation
\begin{equation}\label{ARMA_mat_vec}
\mathbf{A}\mathbf{y}=\mathbf{B}\mathbf{w},
\end{equation}
in which $\mathbf{A}$ and $\mathbf{B}$ are $2mN\times2mN$ nonsingular lower-triangular block-circulant matrices of bandwidth $n$
\begin{equation}\label{Circ_A}
\mathbf{A}=\mathrm{Circ}\{A_0,A_1,\dots,A_n,0,\dots,0\},
\end{equation}
\begin{equation}\label{Circ_b}
\begin{split}
\mathbf{B} & =\mathrm{Circ}\{b_0I_m,b_1I_m,\dots,b_nI_m,0,\dots,0\} \\
 & =\mathrm{Circ}\{b_0,b_1,\dots,b_n,0,\dots,0\}\otimes I_m.
\end{split}
\end{equation}

\begin{prob}\label{ARMA_identify_vec}
Suppose we are given the MA coefficients $\{b_k\}$ of a Schur polynomial and $n+1$ real $m\times m$ matrices $C_0,C_1,\dots,C_n$,  such that the block-Toeplitz matrix \eqref{MToepl} is positive definite. We want to determine the matrix coefficients $\{A_k\}$ such that the first $n+1$ covariance matrices of the periodic process $\{y(t)\}$ match the sequence $\{C_k\}$.
\end{prob}
As in the scalar case, a solution of Problem \ref{ARMA_identify_vec}   solves Problem \ref{CRCEP_vec}. Actually, taking covariance on both sides of the equation (\ref{ARMA_mat_vec}), we   have
\[\mathbf{A}\mathbf{\Sigma}\mathbf{A}^\top=\mathbf{B}\mathbf{D}\mathbf{B}^\top,\]
where $\mathbf{\Sigma}:=\mathbb{E}\{\mathbf{y}\mathbf{y}^\top\}$. Taking inverses and rearranging terms, we obtain
 \begin{align}\label{VectCov}
\mathbf{\Sigma}  =\mathbf{A}^{-1}\mathbf{BDB^\top A^{-\top}}  &=(\mathbf{A^\top}\mathbf{D}^{-1}\mathbf{A})^{-1}\mathbf{B^\top B} \notag \\
& =\mathbf{Q}^{-1}\mathbf{P},
\end{align}
where $\mathbf{Q}:=\mathbf{A^\top}\mathbf{D}^{-1}\mathbf{A}$, $\mathbf{P}=\mathbf{B^\top B}$. The second equality follows from the observation that the block-circulant matrix $\mathbf{B}$ or $\mathbf{B}^\top$ commutes with other block-circulant matrices due to its special structure (\ref{Circ_b}).

\subsection*{Spectral representation and the vector Yule-Walker equation }
Directly from the time-domain representation \eqref{ARMA_prd_vec} we obtain   a spectral representation for the process $\{y(t)\}$
\begin{equation}
\hat{y}(\zeta)=A(\zeta )^{-1}b(\zeta)\hat{w}(\zeta),
\end{equation}
where $A(z)=\sum_{k=0}^{n}A_kz^{-k}$, and a discrete spectral density
\begin{equation}
\Phi(\zeta)=\left[A(\zeta^{-1})^\top D^{-1}A(\zeta)\right]^{-1}b(\zeta)b(\zeta^{-1}),\quad \zeta\in\mathbb{T}_{2N}.
\end{equation}
Next, introducing the impulse response
\begin{equation}
\Gamma_t:=\sum_{k=-N+1}^{N}\zeta_k^tA(\zeta_k)^{-1}b(\zeta_k)\frac{1}{2N},
\end{equation}
  we   obtain a representation for the process in terms of the input white noise as
\begin{equation}\label{one_sided_vec}
y(t)=\sum_{s=-N+1}^{N}\Gamma_{t-s}\, w(s).
\end{equation}
which can be rewritten in terms of the block-circulant matrix
\begin{equation}
\mathbf{\Gamma}=\mathrm{Circ}\{\Gamma_0,\Gamma_1,\dots,\Gamma_N,\Gamma_{-N+1},\dots,\Gamma_{-1}\},
\end{equation}
  as a matrix-vector product
\begin{equation}
\mathbf{y}=\mathbf{\Gamma w}\,.
\end{equation}
As in the scalar case,   block-circulant matrices lead to more compact notations. By following the same steps  as in Sec. \ref{covextsec} we obtain  the   relations
 \begin{equation*}
\mathbf{\Gamma}  =\mathbf{A}^{-1}\mathbf{B}, \quad
\mathbf{\Sigma}  =\mathbf{\Gamma D\Gamma^\top}, \quad
\mathbf{A\Sigma}  =\mathbf{BD\Gamma^\top}.
\end{equation*}
Combining the model equation (\ref{ARMA_prd_vec}) with the one-sided representation (\ref{one_sided_vec}), we obtain the vector analog of \eqref{ARMAforc}
\begin{equation}
\sum_{k=0}^{n}A_kC_{j-k}=D\sum_{k=0}^{n}b_k\Gamma_{k-j}^\top,\quad j=0,1,\dots,n,
\end{equation}
or, the   matrix equation
\begin{equation}\label{charac_A}
A\mathbf{T}_n=D B\mathbf{T}_\Gamma(A),
\end{equation}
where $A,B\in \Rbb^{m\times m(n+1)}$ are the  AR and MA matrix coefficients 
\[
A=\left[\begin{array}{cccc}
A_0 & A_1 & \dots & A_n
\end{array}\right],\quad
B=\left[\begin{array}{ccc}
b_0I_m  & \dots & b_nI_m
\end{array}\right]\]
and
\[
\mathbf{T}_\Gamma(A)  =\bmat 
\Gamma_0^\top & \Gamma_{-1}^\top & \dots & \Gamma_{-n}^\top \\
\Gamma_1^\top & \Gamma_0^\top & \dots & \Gamma_{-n+1}^\top \\
\vdots & \vdots & \ddots & \vdots \\
\Gamma_n^\top & \Gamma_{n-1}^\top & \dots & \Gamma_0^\top  
\emat \,.
\]
To solve \eqref{charac_A} we propose the iterative scheme  for the AR matrix coefficients
\begin{equation}\label{fixed_point_vec}
A^{(k+1)}=D\mathbf{B}\mathbf{T}_\Gamma(A^{(k)})\mathbf{T}_n^{-1},
\end{equation}
with $A^{(0)}$ initialized e.g. with the output of the Levinson-Whittle algorithm \cite{Whittle-63} for the data $\{C_k\}$, and the scaling matrix $D$   given by
\begin{equation}\label{D_in_A}
D(A):=\left(\sum_{k=0}^{n}A_kC_{-k}\right)\left(\sum_{k=0}^{n}b_k\Gamma_k^\top\right)^{-1}.
\end{equation}
The following proposition is a   generalization of Theorem \ref{propos_quasiNewton}. The proof is omitted as it is similar to that already given in the scalar case except for  more cumbersome notation.
\begin{prop}
	The  iteration $(\ref{fixed_point_vec})$ can be interpreted as a quasi-Newton step with a scaling matrix $D$ for the minimization of the function 
	\begin{equation}
\mathbb{J}_P(A)=\mathrm{tr}(A\mathbf{T}_nA^\top)-\int_{-\pi}^{\pi}b(e^{i\theta})b(e^{-i\theta})\log\det[A(e^{i\theta})^\top A(e^{-i\theta}) ]d\nu 
	\end{equation} 
subject to $A(\zeta)\in\mathcal{S}_n$ where $\mathcal{S}_n$ is the set of Schur matrix polynomials such that $A_0$ is lower-triangular with positive diagonal entries. 
\end{prop}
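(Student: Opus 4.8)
The plan is to follow the proof of Theorem~\ref{propos_quasiNewton} line for line in matrix notation; the only genuinely new ingredient is that one now differentiates a $\log\det$ rather than a $\log$ and must keep track of transposes, since the matrix coefficients $A_k$ do not commute. Step one is the matrix analogue of Lemma~\ref{propos_grdt}: an explicit formula for the gradient of $\mathbb{J}_P$ with respect to the block coefficient row $A=[A_0,\dots,A_n]$. Step two is to substitute that formula into a scaled quasi-Newton template and check that it reproduces \eqref{fixed_point_vec}. As in the scalar case, $\mathbb{J}_P(A)$ carries no $D$ (just as \eqref{func_a} carries no $\sigma^2$), so $D$ will enter only as a scaling/stepsize matrix.

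\emph{The gradient.} The quadratic term contributes $\nabla\,\mathrm{tr}(A\mathbf{T}_nA^\top)=2A\mathbf{T}_n$ by symmetry of $\mathbf{T}_n$. For the integral term I would write $\log\det[A(e^{i\theta})^\top A(e^{-i\theta})]=\log\det A(\zeta)+\log\det A(\zeta^{-1})$ with $\zeta=e^{i\theta}$, and differentiate using $\partial_X\log\det X=X^{-\top}$ together with $\partial A(\zeta)/\partial A_j=\zeta^{-j}I_m$; the derivative with respect to the block $A_j$ then equals $\int_{-\pi}^{\pi} b(e^{i\theta})b(e^{-i\theta})\left[\zeta^{-j}A(\zeta)^{-\top}+\zeta^{j}A(\zeta^{-1})^{-\top}\right]d\nu$. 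The crucial point is that, $b$ being scalar, $b(\zeta)A(\zeta)^{-\top}=\Gamma(\zeta)^\top$ with $\Gamma(\zeta)=A(\zeta)^{-1}b(\zeta)$; inserting this and carrying out the same DFT bookkeeping as in the proof of Lemma~\ref{propos_grdt} (in particular $\frac{1}{2N}\sum_{\zeta\in\mathbb{T}_{2N}}\zeta^{m}\Gamma(\zeta)=\Gamma_m$) turns each of the two terms into $\sum_{i=0}^n b_i\Gamma_{i-j}^\top$. Stacking over $j=0,\dots,n$ then yields the matrix analogue of \eqref{Gradient1},
\[
\frac{1}{2}\,\nabla\mathbb{J}_P(A)=A\mathbf{T}_n-B\,\mathbf{T}_\Gamma(A),
\]
where $B=[b_0I_m,\dots,b_nI_m]$ and $\mathbf{T}_\Gamma(A)$ is the block-Toeplitz matrix of \eqref{charac_A}.

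\emph{Recognising the iteration.} With $D=I$, the formula above turns \eqref{fixed_point_vec} into $A^{(k+1)}-A^{(k)}=\bigl(B\,\mathbf{T}_\Gamma(A^{(k)})-A^{(k)}\mathbf{T}_n\bigr)\mathbf{T}_n^{-1}=-\frac{1}{2}\nabla\mathbb{J}_P(A^{(k)})\,\mathbf{T}_n^{-1}$, i.e. the plain quasi-Newton step $A^{(k+1)}=A^{(k)}-\frac{1}{2}\nabla\mathbb{J}_P(A^{(k)})\,\mathbf{T}_n^{-1}$, the preconditioner $\mathbf{T}_n^{-1}$ acting on the right because $A$ is an $m\times m(n+1)$ block row (the factor written $\mathbf{B}$ in \eqref{fixed_point_vec} is precisely this block row $B$). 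Reinstating the scaling matrix $D(A^{(k)})$ on the left gives exactly \eqref{fixed_point_vec} in the form $A^{(k+1)}=D(A^{(k)})\bigl[A^{(k)}-\frac{1}{2}\nabla\mathbb{J}_P(A^{(k)})\,\mathbf{T}_n^{-1}\bigr]$, the matrix version of \eqref{quasi_Newton_scale}; and \eqref{D_in_A} is read off as the $j=0$ instance of the vector Yule--Walker relation \eqref{charac_A}, which both fixes the scaling matrix and identifies it as the one enforcing the $j=0$ moment equation, exactly as \eqref{sigmasq} does in the scalar case.

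\emph{Main obstacle.} The one genuinely new difficulty, and the only place where noncommutativity could interfere, is the $\log\det$ differentiation: one must verify that the two contributions -- from $A(\zeta)$ and from $A(\zeta^{-1})^\top$ -- reassemble into \emph{the same} matrix $B\,\mathbf{T}_\Gamma(A)$, with exactly the transpose pattern appearing in the definition of $\mathbf{T}_\Gamma$, rather than into $\mathbf{T}_\Gamma(A)$ and some unrelated companion. Once the identity $b(\zeta)A(\zeta)^{-\top}=\Gamma(\zeta)^\top$ is in place and the conjugate frequency $\zeta^{-1}$ is followed carefully through the DFT sums, this symmetry goes through as for scalars, and the remainder is the bookkeeping already carried out for Theorem~\ref{propos_quasiNewton} -- which is why the details are omitted.
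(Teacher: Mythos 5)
Your proposal is correct and follows exactly the route the paper indicates: the paper omits this proof, stating only that it is ``similar to that already given in the scalar case except for more cumbersome notation,'' and your argument is precisely that scalar proof (Lemma~\ref{propos_grdt} plus the quasi-Newton identification) carried out in block notation, with the gradient identity $\tfrac12\nabla\mathbb{J}_P(A)=A\mathbf{T}_n-B\,\mathbf{T}_\Gamma(A)$ correctly derived from the $\log\det$ term via $b(\zeta)A(\zeta)^{-\top}=\Gamma(\zeta)^\top$ and the identification of $D(A)$ as the $j=0$ Yule--Walker relation matching \eqref{D_in_A}. No gaps; you in fact supply details the paper leaves to the reader.
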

 The algorithm in the vector case is essentially the same as that in Sect. \ref{alg_prd} if we replace the scalar quantities with their vector counterparts. Again, we can always make $A_0=I_m$ by rescaling.

\section{Smoothing for periodic ARMA models}\label{Smoothing}
Consider the following problem. We have  a stationary vector signal $\{x(t)\}$ observed on the finite interval $[-N+1,N]$, the  observation channel being  described by the linear equation
\begin{equation}\label{obs_proc}
y(t)=Cx(t)+v(t),\qquad t\in \, [-N+1,\,N]
\end{equation}
where $\{v(t)\}$ is a stationary white noise with a known covariance matrix $R=R^\top>0$, independent of $\{x(t)\}$. We want to compute the smoothed estimate $\{\hat{x}(t)\}$ given the finite chunk of observations,
\begin{equation}\label{smoothed_proc}
\hat{x}(t):=\mathbb{E}\{x(t)|\ y(s),\ s\in[-N+1,N]\}
\end{equation}
with $N$ finite. The right-hand side of (\ref{smoothed_proc}) is the orthogonal projection onto the Hilbert space of random variables spanned by the components of $y(s),\ s\in[-N+1,N]$. We shall assume that the process $\{x(t)\}$ admits a bilateral ARMA model description of order $n$ on the interval $[-N+1,N]$. This description could be an approximation of an original stationary model for $x(t)$, say a Gauss-Markov model on the integer line $\Zbb$, obtained by matching a certain number of covariances. Let the bilateral model be
\[\sum_{k=-n}^{n}Q_k\,x(t-k)=\sum_{k=-n}^{n}P_k\, e(t-k),\]
with $\{e(t)\}$ the conjugate process. We shall use vector notations for the finite-interval restrictions of the underlying periodic processes as in \eqref{vecy}. The equivalent circulant model for the vector $\xb$ has the form
$$
\Qb\, \xb = \Pb \, \eb
$$
where $\Qb$ and $\Pb$ are $n$-banded positive block-circulants with elements $\{Q_k\}$ and $\{P_k \}$. Because of the orthogonality of $\xb$ to its conjugate process,   the covariance   $\Sigmab:=\E \xb\xb^{\top}$    has the expression
\begin{equation}\label{cov_x}
\mathbf{\Sigma}=\mathbf{Q}^{-1}\mathbf{P}\in\mathbb{R}^{2N\times2N}.
\end{equation}
which necessarily has a block-circulant structure.  With these data at hand we proceed to compute  the solution of the smoothing problem.    The  procedure is inspired to one     for reciprocal processes described   in \cite[Section VI]{Levy-F-K-90}. 

Write the observation equation   in   vector notation as
\begin{equation*}
\mathbf{y}=\mathbf{C}\mathbf{x}+\mathbf{v},
\end{equation*}
where $\Cb=\textrm{diag}\{C,\dots,C\}$. Since both $\xb$ and the white noise $\vb$ are periodic on $[-N+1,N]$ such is $\yb$. Then use  the standard one-shot solution for the minimum variance Bayesian estimate $\hat{\mathbf{x}}$ e.g. \cite[p. 29]{LPBook}  to get the relation
\begin{equation}\label{MVE}
(\mathbf{\Sigma}^{-1}+\mathbf{C}^\top\mathbf{R}^{-1}\mathbf{C})\hat{\mathbf{x}}=\mathbf{C}^\top\mathbf{R}^{-1}\mathbf{y},
\end{equation}
Substituting  (\ref{cov_x}) into (\ref{MVE}), the matrix on the left-hand side becomes
\begin{equation}
\mathbf{P}^{-1}\mathbf{Q}+\mathbf{C}^\top\mathbf{R}^{-1}\mathbf{C}=\mathbf{P}^{-1}(\mathbf{Q}+\mathbf{P}\mathbf{C}^\top\mathbf{R}^{-1}\mathbf{C}).
\end{equation}
Then define 
\begin{equation}\label{Qhat}
\hat{\mathbf{Q}}:=\mathbf{Q}+\mathbf{P}\mathbf{C}^\top\mathbf{R}^{-1}\mathbf{C}\,, 
\end{equation}
which is a positive-definite  block-circulant since $\mathbf{C}^\top\mathbf{R}^{-1}\mathbf{C}$ is a block-diagonal matrix with positive-semidefinite blocks. In fact $\hat{\mathbf{Q}}$ is bilaterally banded of bandwidth $n$ since such are both summands in the right hand member of \eqref{Qhat}. Then (\ref{MVE}) is equivalent to
\begin{equation}\label{one-shot}
\hat{\mathbf{Q}}\hat{\mathbf{x}}=\mathbf{P}\mathbf{C}^\top\mathbf{R}^{-1}\mathbf{y}:=\hat{\yb}.
\end{equation}
In order to carry out a two-sweep smoothing procedure in the style of the Rauch-Striebel-Tung smoother \cite{Rauch-S-T-65}, we first perform a banded matrix factorization
\[\hat{\mathbf{Q}}=\hat{\Ab}\hat{\Ab}^\top,\]
where
\begin{equation}
	\hat{\mathbf{A}}=\mathrm{Circ}\{\hat{A}_0,\hat{A}_1,\dots,\hat{A}_n,0,\dots,0\}.
\end{equation}
As discussed in Appendix \ref{unilateralARMAsec}, such a factorization is possible if $N$ is taken large enough and can be computed in the spectral domain by standard matrix polynomial factorization algorithms, see e.g. \cite{Rissanen-73}. Then, given $\hat{\Ab}$ and $\hat{\yb}$, to compute the solution to \eqref{one-shot} we first perform a forward sweep described by 
\begin{equation}\label{ForwS}
	\hat{\mathbf{A}}\mathbf{z}=\hat{\yb},
\end{equation}
and then a backward sweep 
\begin{equation}\label{BackS}
	\hat{\mathbf{A}}^\top\hat{\mathbf{x}}=\mathbf{z}\,.
\end{equation}
The two sweeps can be implemented by a forward and a backward recursive algorithm described by    unilateral AR  models. To this  end we need to attach to them explicit boundary values $\hat{x}(-N+1),\hat{x}(-N+2),\dots,\hat{x}(-N+n)$ and $\hat{x}(N-n+1),\dots,\hat{x}(N)$ extracted from  the process $\{\hat x(t)\}$ which we assume are given. Due to the   lower block-triangular structure of $\hat{\Ab}$, the first equation of the forward sweep can be written as
\begin{equation}
\hat{A}_0 z(-N+1)= -\sum_{i=1}^{n}\hat{A}_i z(N-i+1)+\hat{y}(t)\,,
\end{equation}
which needs to be initialized with the boundary values $z(N-n+1),\,z(N-n +2),\,\ldots,z(N)$. These values can be obtained by solving for $z$ the last $n$ block equations in the backward sweep \eqref{BackS} since only the boundary values at two ends of $\hat{\mathbf{x}}$ are involved there due to the banded upper-triangular block-circulant structure of $\hat{\Ab}^{\top}$.

The forward sweep starts by computing the boundary values $z(N-n+1),\dots,z(N)$. After    these $n$ endpoint boundary values are available, the recursion  for $\zb$ can be implemented by the scheme
\begin{equation}
z(t)=\hat{A}_0^{-1}\left[\hat{y}(t)-\sum_{i=1}^{n}\hat{A}_i z(t-i)\right],\qquad t\in[-N+1,N-n].
\end{equation}
One should notice that in this notation, we impose implicitly that $z(-N)=z(N),\dots,z(-N-n+1)=z(N-n+1)$.
 The  backward sweep then proceeds  by using
\begin{equation}
	\hat{x}(t)=\hat{A}_0^{-\top}\left[z(t)-\sum_{i=1}^{n}\hat{A}_i^\top\hat{x}(t+i)\right],\quad t\in[-N+n+1,N-n],\,
\end{equation}
which is initialized with the known terminal boundary values $\hat{x}(N-n+1),\hat{x}(N-n+2),\dots,\hat{x}(N)$.

There is also a dual factorization which would lead to a backward-forward sequence of sweeps but we shall not insist on this point.

\subsection*{A numerical example}
Suppose that we are given a stationary stochastic system in the form of a state-space model
\begin{equation}\label{state-space}
\left\{ \begin{array}{ll}
x(t+1) & =Ax(t)+w(t)\\
y(t) & =Cx(t)+v(t)\\
\end{array} \right.
\end{equation}
where
\begin{equation}
A=\left[\begin{array}{cc}
0.9 & -0.3 \\
0.3 & 0.9
\end{array}\right],\quad
C=\left[\begin{array}{cc}
1 & 2 \\
1 & 0
\end{array}\right]
\end{equation}
are constant matrices, and the processes $\{w(t)\}$ and $\{v(t)\}$ are uncorrelated Gaussian white noises with unit variance. The eigenvalues of $A$ are $0.9\pm0.3i$ with a modulus $0.9487$.

We want to compute the smoothed process (\ref{smoothed_proc}). To do this, we first build a periodic ARMA model of order $n=1$ to approximately describe the process $\{x(t)\}$ on a finite interval by matching the first two stationary state covariances $C_0,C_1$.  The period of interest is set as $2N=50$ and the MA parameters are chosen (quite arbitrarily) as $b_0=1,b_1=0.5$. The unilateral ARMA model looks like
\begin{equation}\label{ARMA_1_1}
A_0x(t)+A_1x(t-1)=b_0w(t)+b_1w(t-1),
\end{equation}
and the AR parameters are computed with a variation of the algorithm of Sect. \ref{alg_prd} adapted to the vector case with $A_0$ scaled to identity and
\begin{equation*}
A_1=\left[\begin{array}{cc}
-0.8609 & 0.2989\\
-0.2989 & -0.8609
\end{array}\right],
\quad D=\left[\begin{array}{cc}
0.8122 & 0 \\
0 & 0.8122 \\
\end{array}\right].
\end{equation*}
Given the model (\ref{ARMA_1_1}) and the observation process (\ref{obs_proc}), the two-sweep smoothing algorithm described in the previous part can be implemented. The two components of the smoothed process $\hat{x}(t)$ computed using the approximate periodic model (\ref{ARMA_1_1}) are shown in figures~\ref{fig1} and \ref{fig2}. The effect of smoothing is appreciable.

\begin{figure}[h!]
	\centering
	\includegraphics[width=0.9\textwidth]{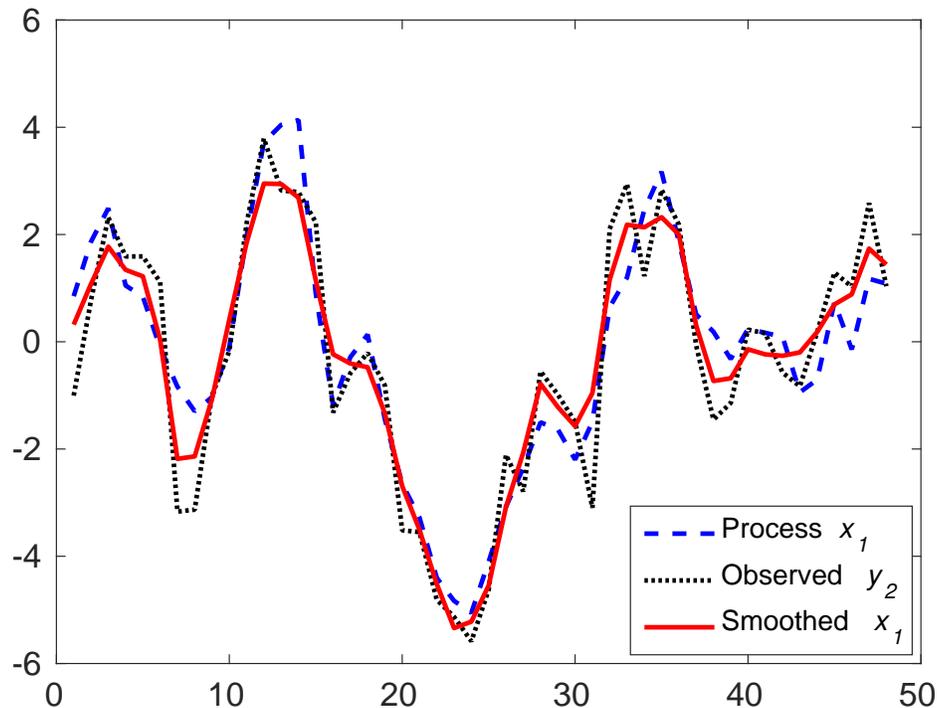}
	\caption{\label{fig1} Result of smoothing for $x_1$}
\end{figure}
\begin{figure}[h!]
	\centering
	\includegraphics[width=0.9\textwidth]{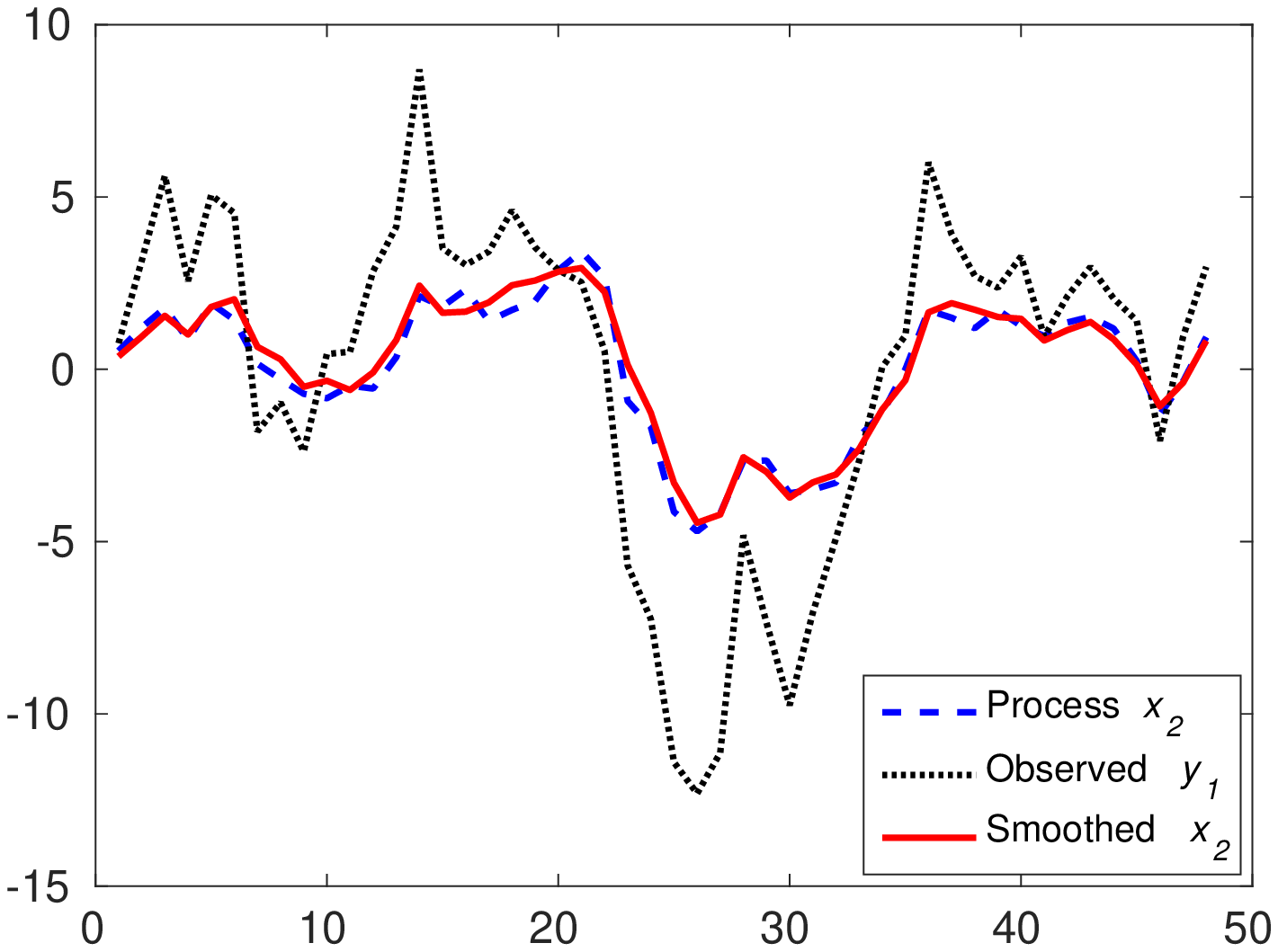}
	\caption{\label{fig2} Result of smoothing for $x_2$}
\end{figure}

\section{Conclusion}
We have developed a new iterative algorithm to solve the rational covariance extension problem, for both the ordinary and the periodic case, using the familiar Yule-Walker-type equations for the ARMA model. The results are also generalized to the vector case and used to    extract approximate stationary models on a finite interval. The procedure  works well for the finite-interval smoothing  and the resulting algorithm  is computationally cheaper than the standard Riccati-based smoother.

\begin{appendices}
\section{Harmonic analysis in $\Zbb_{2N}$ and stationary periodic vector processes}\label{AppA}

The discrete Fourier transform (DFT) $\script F$ maps a finite sequence   $\gb=\{ \gb_k; \,k= -N+1,\,\dots ,\, N\}$,  $ \gb_k \in  \mathbb{C}^m$, into a  sequence  of complex $m$-vectors
\begin{equation} \label{DFT}
\hat{\gb}(\zeta_{j}) := \sum_{k=-N+1 }^{N }\, \gb_k \zeta_{j}^{-k} \,,\qquad j=-N+1,-N+2, \ldots , N,
\end{equation}
where $\zeta_j:=e^{ij\pi/N}$. Here   the discrete variable $\zeta$ takes the $2N$  values $ \zeta_j$, $j=-N+1,\ldots,0,\ldots,N$ and  runs counterclockwise on the  discrete unit circle $\Tbb_{2N}$.   In particular, we have $\zeta_j=(\zeta_1)^j$  and $\zeta_{-k}=\overline{\zeta_k}$. The inverse DFT $\script F^{-1}$ is given by 
\begin{equation} \label{InvDFT}
\gb_k = \frac{1}{2N}\sum_{j=-N+1}^N\zeta_{j}^k \hat{\gb}(\zeta_j),\quad k =  -N+1,-N+2, \dots ,N,
\end{equation}
which can also be written as a Stieltjes integral
\begin{equation} \label{InvDFTmeas}
\gb_k = \int_{-\pi}^\pi e^{ik\theta}\hat{\gb}(e^{i\theta})  d\nu(\theta),\quad k =  -N+1,-N+2, \dots ,N,
\end{equation}
where  $\nu$ is a step function with steps $\frac{1}{2N}$ at each $\zeta_k$; i.e., 
\begin{equation}
\label{nu}
d\nu(\theta) =\sum_{j=-N+1}^N\delta(e^{i\theta}-\zeta_j)\frac{d\theta}{2N}.
\end{equation}
This makes the DFT   a unitary map from $\mathbb{C}^{m\times 2N}$ onto $L_{m}^2([-\pi,\pi],d\nu)$ so that for arbitrary   strings $\hb=\{\hb_k\}$, $\gb=\{\gb_k\}$, one has
\begin{equation}
\label{Plancherel}
\!\langle \gb,\hb\rangle_{\Cbb^{m\times 2N}}\!=\!\frac{1}{2N}\!\!\!\sum_{k=-N+1}^N \hat{\gb}(\zeta_k)\hat{\hb}(\zeta_{-k})^*\!= \!\int_{-\pi}^\pi \!\hat{\gb}(e^{i\theta})\hat{\hb}(e^{i\theta})^*d\nu,
\end{equation}
which is {\em Plancherel's Theorem\/} for DFT.

Next consider  a zero-mean  stationary $m$-dimensional process $\{y(t)\}$  defined on $\Zbb_{2N}$; i.e., a stationary process defined on a finite interval $[-N+1,\,N]$ of the integer line $\Zbb$ and extended to all of $\Zbb$ as a periodic stationary process with  period $2N$.  Let $C_{-N+1},C_{-N+2},\dots,C_{N}$ be the $m\times m$ covariance lags $C_k:=\E\{ y(t+k)y(t)^*\}$, and define its discrete Fourier transformation
\begin{equation} \label{c2Phi}
\Phi(\zeta_{j}) := \sum_{k=-N+1 }^{N }\, C_k \zeta_{j}^{-k} \,,\qquad j=-N+1,\dots , N,
\end{equation}
which is a positive, Hermitian matrix-valued function of $\zeta$. Then, as seen from \eqref{InvDFT} and \eqref{InvDFTmeas}, 
\begin{equation} \label{Phi2c}
C_k = \frac{1}{2N}\sum_{j=-N+1 }^{N  }\zeta_{j}^k \Phi(\zeta_j)=\int_{-\pi}^\pi e^{ik\theta}\Phi(e^{i\theta})  d\nu,\quad k \in\Zbb_{2N}
\end{equation}
The $m\times m$ matrix  function $\Phi$ is the {\em spectral density\/} of the vector process $y$. In fact, let
\begin{equation}
\label{yDFT}
\hat{y}(\zeta_k):= \sum_{t=-N+1}^N y(t)\zeta_k^{-t}, \quad k=-N+1,\dots, N,
\end{equation}
be the discrete Fourier transformation of the process $y$. Since $\frac{1}{2N}\sum_{t=-N+1}^N (\zeta_k\zeta_\ell^*)^t =\delta_{k\ell}$, the random variables \eqref{yDFT} are uncorrelated, and 
\begin{equation}
\label{yhatyhat}
\frac{1}{2N}\E\{ \hat{y}(\zeta_k)\hat{y}(\zeta_\ell)^*\}=\Phi(\zeta_{k})\delta_{k\ell}.
\end{equation}
This yields a spectral representation of $y$ analogous to the usual one, namely
\begin{equation}
\label{ }
y(t)=\frac{1}{2N}\sum_{k=-N+1}^N \zeta_k^t\,\hat{y}(\zeta_k)=\int_{-\pi}^\pi e^{ik\theta}d\hat{y}(\theta),
\end{equation}
where $d\hat{y}(\theta):=\hat{y}(e^{i\theta})d\nu(\theta)$.

\subsection*{Block-circulant matrices}\label{CircMat}
  In the multivariable circulant rational covariance extension problem we consider {\em Hermitian\/} circulant  matrices 
\begin{equation}
\label{M_C}
\Mb:=\Circ\{ M_0,M_1,M_2,\dots, M_N,M_{N-1}^*,\dots,M_2^*,M_1^*\},
\end{equation}
where, by periodicity $M_{-k}=M_k^*$ so that they can be represented in form
\begin{equation}
\label{S2C}
\Mb =\sum_{k=-N+1}^N S^{-k}\otimes M_k, \quad 
\end{equation}
where $\otimes$ is the Kronecker product and $S$ is the nonsingular $2N\times 2N$  cyclic shift matrix
\begin{equation}
\label{S}
S := \left[\begin{array}{cccccc}0 & 1 & 0 & 0 & \dots & 0 \\0 & 0 & 1 & 0 & \dots & 0 \\0 & 0 & 0 & 1 & \dots & 0 \\\vdots & \vdots & \vdots & \ddots & \ddots & \vdots \\0 & 0 & 0 & 0 & 0 & 1 \\1 & 0 & 0 & 0 & 0 & 0\end{array}\right].
\end{equation}
The $m\times m$ matrix  pseudo-polynomial
\begin{equation}
\label{symbol}
M(\zeta)=\sum_{k=-N+1}^N M_k \zeta^{-k}, \quad M_{-k}=M_k^*
\end{equation}
is called the {\em symbol\/} of $\Mb$. Let $\Sb=S\otimes I_m$ be the $2mN\times 2mN$ cyclic shift matrix which satifies the cyclic relations   $\Sb^{2N}=\Sb^0=\Ib :=I_{2mN}$, and 
\begin{equation}
\label{Sbpowers}
\Sb^{k+2N}=\Sb^k, \qquad \Sb^{2N-k}=\Sb^{-k}=(\Sb^k)\Tr .
\end{equation}
It i snot hard to show that
\begin{equation}
\label{circulantcondition}
\Sb\Mb \Sb^*=\Mb,
\end{equation}
 is both necessary and sufficient for $\Mb$ to be block-circulant. With $\gb:=\big(\gb_{-N+1}\Tr,\gb_{-N+2}\Tr,\dots,\gb_N\Tr\big)\Tr$, we have
\begin{equation}\label{ }
[\Sb\gb]_k=\gb_{k+1}, \quad k\in\mathbb{Z}_{2N}. 
\end{equation}
Then, in view of \eqref{DFT},  $\zeta {\script F}(\gb)(\zeta)={\script F}(\Sb\gb)(\zeta)$,
from which it follows that
\begin{equation}
\label{Cg}
{\script F}(\Mb\gb)(\zeta)=M(\zeta){\script F}(\gb)(\zeta),
\end{equation}
where the $m\times m$ matrix fuction $M(\zeta)$ is the symbol \eqref{symbol} of the block-circulant matrix $\Mb$.
An important property of block-circulant matrices is that they can be block-diagonalized by the discrete Fourier transform. Using this fact  it follows from \eqref{Cg} that  
\begin{displaymath}
\Sb\Mb^{-1}\Sb^*=\Mb^{-1}.
\end{displaymath}
Consequently, $\Mb^{-1}$ is also a block-circulant  matrix with symbol $M(\zeta)^{-1}$. In general, in view of the block-circulant property \eqref{S2C} and \eqref{Sbpowers}, quotients of symbols are themselves pseudo-polynomials of degree at most $N$ and hence symbols. More generally, if $\Ab$ and $\Bb$ are block-circulant matrices of the same dimension with symbols $A(\zeta)$ and $B(\zeta)$ respectively, then $\Ab\Bb$ and $\Ab+\Bb$ are block-circulant matrices with symbols $A(\zeta)B(\zeta)$ and $A(\zeta)+B(\zeta)$, respectively.  In fact, block-circulant matrices of a fixed dimension form an algebra, and the DFT is an {\em algebra   homomorphism} of the set of  circulant matrices  onto the   pseudo-polynomials of degree at most $N$ in the variable $\zeta \in \Tbb_{2N}$.

\section{Unilateral ARMA models and spectral factorization}\label{unilateralARMAsec}
Analysis in this part of the appendix is confined in the scalar case  for simplicity. Generalization to vector processes is straightforward.
As explained in Section~\ref{covextsec}, a periodic process $y$ has a discrete spectrum $\Phi(\zeta)$   defined 
only in the discrete points of $ \mathbb{T}_{2N}:=\{\zeta_{-N+1},\zeta_{-N+2},\dots,\zeta_N\}$. Since $\Phi$ takes positive values on $\mathbb{T}_{2N}$, there are  trivial discrete factorizations
\begin{equation}
\label{discretefactor}
\Phi(\zeta_k)=W(\zeta_k)W(\zeta_k)^*\quad k=-N+1,\dots, N\,,
\end{equation}
so that, defining
\begin{displaymath}
W_k= \frac{1}{2N}\sum_{j=-N+1}^N W(\zeta_j)\zeta_j^k, \quad k=-N+1,\dots, N,
\end{displaymath}
we can write \eqref{discretefactor} in the form 
\begin{equation}
\label{Phi(zeta)}
\Phi(\zeta)=W(\zeta)W(\zeta)^*.
\end{equation}
where $W(\zeta)$ is the discrete Fourier transform
\begin{displaymath}
W(\zeta)=\sum_{k=-N+1}^N W_k\zeta^{-k}.
\end{displaymath}

Formally substituting the variable  $z\in \mathbb{T}$ in place of $\zeta$ in $W$, we obtain a spectral factorization equation on the whole unit circle, 
\begin{equation}
\label{Phi(z)}
\tilde\Phi(z)=W(z)W(z)^*,\quad z\in\mathbb{T},
\end{equation}
  where $\tilde{\Phi}(z)$ must then be positive on $\Tbb$ and hence   a continuous spectral density which, frequency sampled with sampling interval $\frac{\pi}{N}$,  yields $\tilde\Phi(\zeta)=\Phi(\zeta)$ on $\mathbb{T}_{2N}$. This $\tilde{\Phi}(z)$ is a spectral density of a non-periodic stationary process  which  has the covariance lags
\begin{displaymath}
\tilde{c}_k=\int_{-\pi}^\pi e^{ik\theta}\tilde\Phi(e^{i\theta})\frac{d\theta}{2\pi}, \quad k=0,1,\dots,n ,
\end{displaymath}
 differing from $c_0,c_1,\dots,c_n$. However, setting $\Delta\theta_j:=\theta_j-\theta_{j-1}$ where $e^{\theta_j}=\zeta_j$,   we see from \eqref{zetadefn} that $\Delta\theta_j=\pi/N$ and that the integral   with $\tilde\Phi$ fixed is approximated by the Riemann sum
\begin{displaymath}
c_k= \sum_{j=-N+1}^Ne^{ik\theta_j} \tilde\Phi(\zeta_j)\frac{\Delta\theta_j}{2\pi} 
\end{displaymath}
converging to $\tilde{c}_k$ for $k=0,1,\dots,n$ as $N\to\infty$. In particular,  $\tilde\Phi \in L^1(\Tbb)$ is a bona fide rational  spectral density which has a unique outer spectral factor $W(z)$; see, e.g., \cite{LPBook}.

Hence, since  $\Phi(\zeta)$ is the symbol of the circulant covariance matrix $\Sigmab$,   \eqref{Phi(zeta)} can be written in  matrix form as
\begin{equation}
\label{matrixWW*}
\Sigmab=\Wb\Wb^*,
\end{equation}
where $\Wb$ is the $2N\times 2N$ circulant matrix with symbol $W(\zeta)$.  As explained in   \cite{Pi-MTNS016}, in the discrete setting $W(\zeta)$ can take the form corresponding to the outer spectral factor in (\ref{Phi(z)})
\begin{equation}
\label{analyticW}
W(\zeta)=\sum_{k=0}^N W_k\zeta^{-k},
\end{equation}
which in turn corresponds to $\Wb$ being  {\em lower-triangular circulant}, i.e.,
\begin{equation}
\label{ }
\Wb=\Circ\{ W_0,W_1,\dots,W_N,0,\dots,0\}.
\end{equation}
Note that a lower-triangular circulant matrix is not really lower triangular as the circulant structure has to be preserved. 
Since $\Sigmab$ is invertible, then so is $\Wb$.   

Next define the periodic stochastic process $\{w(t),\, t=-N+1\dots,N\}$ for which $\wb = [w(-N +1),w(-N +2),\dots,w(N)]\Tr$ is given by
\begin{equation}
\label{wdefn}
\wb=\Wb^{-1}\yb.
\end{equation}
Then, in view of \eqref{matrixWW*}, we obtain $\E\{\wb\wb^*\}=\Ib$, i.e., the process $w$ is a periodic white noise process. Consequently we have the unilateral representation 
\begin{displaymath}
y(t)=\sum_{k=0}^N W_k\, w(t-k)
\end{displaymath}
in terms of white noise.

\subsection*{Factorizability of polynomials  and of   banded circulant matrices}\label{sec_factorizability}
We shall consider the spectral  factorization of a scalar pseudopolynomial $p(\zeta)\in\mathfrak{P}_+(N)$ of degree $n$ as
	\begin{equation}\label{circ_fact}
	p(\zeta)=a(\zeta)a(\zeta^{-1}),\quad \zeta\in\mathbb{T}_{2N},
	\end{equation}
where $ a(\zeta)=\sum_{k=0}^{n}a_k\zeta^{-k}. $

This problem is equivalent to factorization of symmetric 	$n$-banded positive-definite circulant matrices. Given such a matrix	\begin{equation}
	\mathbf{P}=\mathrm{Circ}\{p_0,p_1,\dots,p_n,0,\dots,0,p_n,p_{n-1},\dots,p_1\}\in\mathbb{R}^{2N\times 2N},
	\end{equation}
an equivalent question is under what condition does it admit a banded $2N\times 2N$ circulant factor
$ \mathbf{P}=\mathbf{AA^\top} $
with
	\begin{equation}
	\mathbf{A}=\mathrm{Circ}\{a_0,a_1,\dots,a_n,0,\dots,0\}.
\end{equation}

	Equation (\ref{circ_fact}) looks very much like the polynomial factorization in the complex variable $z$
	\begin{equation}\label{ordy_fact}
	p(z)=a(z)a(z^{-1}),
	\end{equation}
which is well-known to admit Schur (in particular outer) solutions if and only if $p(\zeta)\in \mathfrak{P}_+$, \cite{LPBook}.

Clearly 	If $p(z)$ is factorizable as in (\ref{ordy_fact}), then a corresponding factorization  (\ref{circ_fact}) will hold, since $\zeta$ is a restriction of $z$ to the set $\mathbb{T}_{2N}$ and the coefficients of the polynomial factors can be chosen to be the same. In particular to  the (unique)  Schur  factor there will  correspond a polynomial factor solution of \eqref{circ_fact} which we shall still call Schur. In fact, the converse is also true.

\begin{lemma}
		Asume $N>n$, then the pseudo-polynomial $p(\zeta)$ is factorizable if and only if the polynomial $p(z)$ is factorizable, and the outer factors have the same coefficients.
\end{lemma}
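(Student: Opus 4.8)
The plan is to establish the substantive implication---that factorizability of the pseudo-polynomial $p(\zeta)$ on the discrete circle $\mathbb{T}_{2N}$ forces factorizability of the polynomial $p(z)$ on the whole unit circle---by an elementary root-counting argument; the converse and the statement about outer factors then follow from the discussion preceding the lemma together with uniqueness of the outer spectral factor.

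First I would suppose $p(\zeta)=a(\zeta)a(\zeta^{-1})$ for every $\zeta\in\mathbb{T}_{2N}$, where $a(\zeta)=\sum_{k=0}^{n}a_k\zeta^{-k}$, and introduce the symmetric Laurent polynomial
\[
r(z):=a(z)a(z^{-1})-p(z)=\sum_{k=-n}^{n}r_k z^{-k},\qquad r_{-k}=r_k,
\]
whose Laurent degree is at most $n$. By hypothesis $r$ vanishes at each of the $2N$ pairwise distinct nonzero points $\zeta_j=e^{ij\pi/N}$, $j=-N+1,\dots,N$. Multiplying by $z^{n}$ turns $r$ into an ordinary polynomial $z^{n}r(z)$ of degree at most $2n$ in $z$; since $N>n$ we have $2N>2n$, so this polynomial has strictly more than $2n$ distinct zeros and must therefore vanish identically. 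Hence $r\equiv 0$, i.e. $p(z)=a(z)a(z^{-1})$ holds as an identity of Laurent polynomials, which is exactly factorizability of $p(z)$, and with the very same coefficients $\{a_k\}$.

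For the converse, one restricts the polynomial identity $p(z)=a(z)a(z^{-1})$ to $z=\zeta\in\mathbb{T}_{2N}$, as already noted before the lemma. For the assertion on outer factors, recall that $p(z)\in\mathfrak{P}_+$ possesses a unique outer spectral factor $a(z)$ (see \cite{LPBook}); by the equivalence just proved, its restriction $a(\zeta)$, carrying the same coefficients, is a spectral factor of $p(\zeta)$ whose $z$-extension is outer, which is precisely what it means for $a(\zeta)$ to be the (Schur/outer) factor of $p(\zeta)$ in the sense introduced above. Consequently the outer factors of $p(\zeta)$ and of $p(z)$ have identical coefficients.

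The computations here are light; the one place to be careful is the degree bookkeeping---a symmetric Laurent polynomial of Laurent degree $n$ becomes, after multiplication by $z^{n}$, an ordinary polynomial of degree $2n$, so that the inequality $2N>2n$ (equivalently $N>n$) is exactly the hypothesis needed---and, to a lesser extent, making explicit that ``the outer factor of $p(\zeta)$'' is to be read as the restriction of the outer factor of its $z$-extension, so that uniqueness transfers without ambiguity.
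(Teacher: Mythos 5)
Your proposal is correct and follows essentially the same route as the paper: both form the difference between $p$ and $a(z)a(z^{-1})$, observe that this symmetric pseudo-polynomial of degree $n$ vanishes at all $2N$ distinct points of $\mathbb{T}_{2N}$, and conclude it is identically zero because $2N>2n$ --- your root-counting step for $z^{n}r(z)$ is just the polynomial restatement of the paper's full-column-rank Vandermonde argument. The handling of the converse and of the outer factors likewise matches the discussion the paper gives immediately before the lemma.
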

\begin{proof}
Assume \eqref{circ_fact} holds and define the pseudo-polynomial in $z$
	\[q(z)=\sum_{k=-n}^{n}q_kz^{-k}:=a(z^{-1})a(z)\,. \]
By matching the coefficients	it must hold that
	\begin{equation}\label{coeff_mathch}
	\left[\begin{array}{cccc}
	a_0 & a_1 & \dots & a_n \\
	& a_0 & \dots & a_{n-1} \\
	&  & \ddots & \vdots \\
	&  &  & a_0
	\end{array}\right]
	\left[\begin{array}{c}
	a_0 \\
	a_1 \\
	\vdots \\
	a_n
	\end{array}\right]
	=\left[\begin{array}{c}
	q_0 \\
	q_1 \\
	\vdots \\
	q_n
	\end{array}\right]
	\end{equation}
so that	
\begin{equation}\label{coeff_poly}
	\sum_{k=-n}^{n}(p_k-q_k)\zeta_j^{-k}=0,\quad j=-N+1,\dots,N.
	\end{equation}
	Viewing $\{p_k-q_k\}_{k=-n}^n$ as variables,   (\ref{coeff_poly}) is an overdetermined linear system of equations, with a $2N\times(2n+1)$ Vandermonde matrix as the coefficient matrix, which is apparently of full column rank. Thus, the solution vector must be zero, i.e.,
	\begin{equation}
	p_k=q_k,\quad k=-n,\dots,n.
	\end{equation}
	This implies that $p(z)$ admits a polynomial factorization (\ref{ordy_fact}).
\end{proof}		
	There are efficient algorithms to compute the outer polynomial factor directly without solving for all the roots of $p(z)$, as described in \cite{Rissanen-73} . 
 
 We now turn to the unilateral ARMA representation of a periodic process with a rational spectral density: 
\begin{displaymath}
\Phi(\zeta)=\frac{P(\zeta)}{Q(\zeta)}\,,\qquad \zeta \in \Tbb_{2N}
\end{displaymath}
  This representation clearly requires both $P(\zeta)$ and $Q(\zeta)$ to admit polynomial spectral factors  of finite degree $n<N$.  There is a difficulty here since,  $P(\zeta)$ and $Q(\zeta)$ can admit  polynomial spectral factors   if and only if their extension $P(z),\,Q(z)$ to the unit circle does so. In other words   positivity on the discrete set $\Tbb_{2N}$ must imply positivity as polynomial functions of $z\in \Tbb$. To this end we may provide the following criterion.

\begin{lemma}\label{positivity}
Let $p(\zeta)\in \mathfrak{P}_+(N)$ be  a polynomial of degree $n$. Then if $N$ is large enough, the extension of $p(\zeta)$  to the unit circle $p(z)\,;\,  z\in \Tbb$, must also be positive for all $z\in\Tbb$.
\end{lemma}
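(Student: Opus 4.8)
The plan is to exploit that positivity of the continuous extension $p(z)$ on $\Tbb$ is an \emph{open} condition, together with the fact that, although $\mathfrak{P}_+(N)$ strictly contains $\mathfrak{P}_+$ for every finite $N$, the discrepancy vanishes as $N\to\infty$. The statement is to be applied to the pseudo-polynomial actually produced by the theory, namely the optimal denominator $\hat Q=\hat Q_N$ of Theorem~\ref{optthm} (equivalently, the $\hat Q=T(\hat a)$ of Theorem~\ref{periodThm}) for the fixed data $\cb$ and $P\in\mathfrak{P}_+$ of the problem. First I would note that, as $N\to\infty$, the discrete measure $d\nu$ supported on $\Tbb_{2N}$ converges weakly to $\frac{d\theta}{2\pi}$, so the dual functional \eqref{dual} converges, uniformly on compact subsets of the $(n+1)$-dimensional space of coefficient vectors $\qb$, to the classical dual functional $\langle\cb,\qb\rangle-\int_{-\pi}^{\pi}P(e^{i\theta})\log Q(e^{i\theta})\,\frac{d\theta}{2\pi}$ of the rational covariance extension problem on $\Zbb$ of \cite{BGuL,SIGEST}. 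Since both functionals are strictly convex and (uniformly) coercive on the relevant cone, their unique minimizers converge: $\hat Q_N\to\hat Q_\infty$ in coefficient space.

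Next I would invoke the classical theory: because $\Tb_n>0$ and $P\in\mathfrak{P}_+$, the limiting optimizer $\hat Q_\infty$ belongs to $\mathfrak{P}_+$ --- indeed $P/\hat Q_\infty$ is a bona fide rational spectral density, bounded and bounded away from zero on $\Tbb$ --- so $\hat Q_\infty(e^{i\theta})\ge\varepsilon$ for all $\theta$ and some $\varepsilon>0$. Since $Q\mapsto\min_\theta Q(e^{i\theta})$ is continuous on the finite-dimensional coefficient space --- one has $\sup_\theta|\hat Q_N(e^{i\theta})-\hat Q_\infty(e^{i\theta})|\le(2n+1)\,\|\hat{\qb}_N-\hat{\qb}_\infty\|_\infty$ --- the convergence $\hat Q_N\to\hat Q_\infty$ forces $\min_\theta\hat Q_N(e^{i\theta})\ge\varepsilon/2>0$ for all $N\ge N_0$; that is, $\hat Q_N\in\mathfrak{P}_+$, which is exactly what makes the polynomial spectral factorization $\hat Q(\zeta)=\hat a(\zeta)\hat a(\zeta^{-1})$ of Section~\ref{sec_factorizability} available.

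I expect the main obstacle to be making the convergence $\hat Q_N\to\hat Q_\infty$ of the \emph{minimizers} rigorous: this needs a stability argument for the argmin of the strictly convex dual as the integrating measure is perturbed (a $\Gamma$-convergence type statement), plus a \emph{uniform} coercivity bound keeping $\{\hat Q_N\}_N$ in a fixed compact set --- and it is the latter that forces $N$ to exceed the threshold $N_0$ beyond which $\cb\in\mathfrak{C}_+(N)$, cf.\ \cite[Proposition~6]{LPcirculant}. For the bare assertion about an arbitrary \emph{fixed} pseudo-polynomial $p$ of degree $n$ a soft argument also works: if its extension were strictly negative, say $p(z_0)<0$ at some $z_0\in\Tbb$, then by Bernstein's inequality $\|p'\|_\infty\le n\|p\|_\infty$ the function $p$ stays negative on an arc about $z_0$ of length at least $2|p(z_0)|/(n\|p\|_\infty)$, so once $N>\pi n\|p\|_\infty/(2|p(z_0)|)$ the grid $\Tbb_{2N}$ meets that arc and $p\notin\mathfrak{P}_+(N)$; hence for $N$ large $\mathfrak{P}_+(N)$ contains no pseudo-polynomial whose extension is negative somewhere, which is the content of the lemma.
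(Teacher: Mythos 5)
Your closing ``soft argument'' is essentially the paper's own proof of Lemma~\ref{positivity}: the paper assumes $p(z_0)<0$ at some $z_0\in\Tbb$, notes by continuity that $p<0$ on an arc of positive length, and observes that for $N$ large enough some grid point of $\Tbb_{2N}$ falls inside that arc, contradicting $p\in\mathfrak{P}_+(N)$. Your Bernstein-inequality refinement only makes the arc length, and hence the threshold $N_0>\pi n\|p\|_\infty/(2|p(z_0)|)$, explicit --- a modest but genuine sharpening. The first two paragraphs of your proposal prove something different and stronger: convergence of the $N$-dependent minimizers $\hat Q_N$ of \eqref{dual} to the classical optimizer on $\Zbb$, hence a lower bound $\min_\theta\hat Q_N(e^{i\theta})\ge\varepsilon/2$ uniform in $N$. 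That addresses a point the paper leaves implicit: the lemma's threshold depends on $p$, yet the paper applies the lemma to $p=\hat Q_N$, which itself varies with $N$, so without a uniformity argument of the kind you sketch the application is circular (the paper escapes by simply \emph{defining} feasibility to mean that $N$ is large enough). Two caveats on your final sentence: the claim that for large $N$ the cone $\mathfrak{P}_+(N)$ contains no pseudo-polynomial with a somewhere-negative extension cannot be read uniformly in $p$, since $\mathfrak{P}_+\subsetneq\mathfrak{P}_+(N)$ strictly for every finite $N$ (perturb downward a nonnegative degree-$n$ polynomial whose zero avoids the grid), so the statement must be understood for fixed $p$, as your explicit threshold already indicates; and, exactly as in the paper's proof, the contradiction argument delivers only $p\ge0$ on $\Tbb$, not the strict positivity asserted in the lemma --- a fixed $p$ vanishing at $e^{i\alpha}$ with $\alpha/\pi$ irrational lies in $\mathfrak{P}_+(N)$ for every $N$ yet is never strictly positive on $\Tbb$.
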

\begin{proof} For assume that for some $\zeta_0\in \Tbb$, $p(\zeta_0)<0$; then there must be an interval  neighborhood of $\zeta_0$ in $\Tbb$ having finite measure where $p(e^{i\theta})<0$. But if $N$ is large enough some $\zeta_k\in \Tbb_{2N}$ must belong to this neighborhood and  then $p(\zeta_k)$ must be negative which is impossible.
\end{proof}

The following   corollary    also holds for block circulant matrices which are considered in Section~\ref{Smoothing}. 

\begin{cor}\label{lem:bandedfactor}
If $N$ is large enough, a positive definite  Hermitian  circulant matrix $\Mb$ admits a factorization $\Mb=\Vb\Vb^*$, where $\Vb$ is a banded lower-triangular circulant matrix of order $n<N$, if and only if $\Mb$ is bilaterally banded of order $n$.
\end{cor}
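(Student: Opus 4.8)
The plan is to prove the two implications separately, with the ``only if'' direction being essentially a bookkeeping exercise on symbols and the ``if'' direction resting on Lemma~\ref{positivity} together with classical polynomial spectral factorization.

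For the ``only if'' part, suppose $\Mb=\Vb\Vb^*$ with $\Vb=\Circ\{v_0,v_1,\dots,v_n,0,\dots,0\}$, $n<N$. I would first record that the symbol of $\Vb$ is the polynomial $v(\zeta)=\sum_{k=0}^{n}v_k\zeta^{-k}$, and that, since $\Sb$ is a real orthogonal (permutation) matrix, writing $\Vb=\sum_{k=0}^n v_k\Sb^{-k}$ gives $\Vb^*=\sum_{k=0}^n\bar v_k\Sb^{k}$, so the symbol of $\Vb^*$ is the companion polynomial $v^*(\zeta):=\sum_{k=0}^{n}\bar v_k\zeta^{k}$, which on $\Tbb_{2N}$ coincides with $\overline{v(\zeta)}$. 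By the algebra homomorphism property of the DFT (Appendix~\ref{AppA}), $\Mb=\Vb\Vb^*$ is circulant with symbol $v(\zeta)v^*(\zeta)$, a Hermitian pseudo-polynomial of degree exactly $n$; hence $\Mb$ is bilaterally $n$-banded. This direction needs neither positive definiteness nor $N$ large.

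For the ``if'' part, assume $\Mb$ is positive definite, Hermitian, circulant and bilaterally $n$-banded, with symbol $M(\zeta)=\sum_{k=-n}^{n}M_k\zeta^{-k}$, $M_{-k}=\bar M_k$. Positive definiteness of the circulant is equivalent to $M(\zeta)>0$ on $\Tbb_{2N}$ (Appendix~\ref{AppA}, cf.\ \cite{CarliGeorgiou}), i.e.\ $M(\zeta)\in\mathfrak{P}_+(N)$. By Lemma~\ref{positivity}, for $N$ large enough the extension $M(z)$, $z\in\Tbb$, is strictly positive, so $M(z)\in\mathfrak{P}_+$. The classical Fej\'er--Riesz/outer spectral factorization then produces a polynomial $v(z)=\sum_{k=0}^{n}v_k z^{-k}$ of degree $n$ with $M(z)=v(z)v^*(z)$ holding identically as Laurent polynomials, $v$ being the (unique) outer factor. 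Since $v$ contains only nonnegative powers of $z^{-1}$, the matrix $\Vb:=\Circ\{v_0,\dots,v_n,0,\dots,0\}$ is a well-defined banded lower-triangular circulant of order $n$ (here $n<N$ is used, so that $\Vb\Vb^*$ involves only the shifts $\Sb^{-n},\dots,\Sb^{n}$ with no cyclic overlap). Restricting $M(z)=v(z)v^*(z)$ to $\Tbb_{2N}$ and invoking the DFT homomorphism once more, $\Vb\Vb^*$ has the same symbol as $\Mb$ and therefore equals $\Mb$; positive definiteness forces $v(\zeta)\ne 0$ on $\Tbb_{2N}$, so $\Vb$ is invertible.

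The block-circulant version asserted after the statement follows by the same argument, replacing Fej\'er--Riesz by the matrix polynomial spectral factorization (e.g.\ \cite{Rissanen-73}), which again yields a factor of the same degree $n$, and noting that the proof of Lemma~\ref{positivity} carries over verbatim to matrix pseudo-polynomials by continuity on $\Tbb$. The only genuine obstacle is the interplay between the two notions of positivity: positivity of $M$ on the finite grid $\Tbb_{2N}$ does not by itself guarantee positivity of the extended symbol on all of $\Tbb$, which is exactly why ``$N$ large enough'' appears and is precisely the gap closed by Lemma~\ref{positivity}; once that lemma is available, the corollary is a short consequence.
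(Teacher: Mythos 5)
Your proof is correct and follows exactly the route the paper intends (the paper states this as a corollary without writing out the argument): the ``only if'' direction is the symbol/DFT-homomorphism bookkeeping, and the ``if'' direction passes from positivity on $\Tbb_{2N}$ to positivity on $\Tbb$ via Lemma~\ref{positivity}, applies classical outer spectral factorization, and restricts the resulting degree-$n$ factor back to the grid. Your explicit remark that $n<N$ prevents cyclic wrap-around of the band is a worthwhile detail the paper leaves implicit.
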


The covariance matrix of a periodic process $\yb$ having a rational spectral density $\Phi(\zeta)= P(\zeta)/Q(\zeta)$ has the representation $\Sigmab=\Qb^{-1}\Pb$, where $\Qb$ and $\Pb$ are banded, positive definite, Hermitian, circulant matrices of order $n$ having symbols $Q(\zeta)$ and $P(\zeta)$. 
Hence, by Corollary~\ref{lem:bandedfactor}, for $N$ large enough there are factorizations
\begin{displaymath}
\Qb=\Ab\Ab^*\quad\text{and}\quad \Pb=\Bb\Bb^*,
\end{displaymath}
where $\Ab$ and $\Bb$ are banded lower-diagonal circulant matrices of order $n$. Consequently, $\Sigmab=\Ab^{-1}\Bb(\Ab^{-1}\Bb)^*$, i.e., 
\begin{equation}
\label{W=AinvB}
\Wb=\Ab^{-1}\Bb,
\end{equation}
which together with \eqref{wdefn} yields $\Ab\yb=\Bb\wb$, i.e., the unilateral ARMA model
\begin{equation}
\label{unilateralARMA}
\sum_{k=0}^n a_k y(t-k) = \sum_{k=0}^n b_k w(t-k).
\end{equation}
Since $\Ab$ is nonsingular, $a_0\ne 0$, and hence we can normalize by setting $a_0=1$.
In particular, if $\Pb=\Ib$, we obtain the AR representation
\begin{equation}
\label{unilateralAR}
\sum_{k=0}^n a_k y(t-k) =  w(t).
\end{equation}
Symmetrically, there is factorization
\begin{equation}
\label{matrixWbarWbar*}
\Sigmab=\bar{\Wb}\bar{\Wb}^*,
\end{equation}
where $\bar{\Wb}$ is upper-triangular circulant, i.e. the transpose of a lower-triangular circulant matrix, and a white-noise process
\begin{equation}
\label{wbardefn}
\bar\wb=\bar{\Wb}^{-1}\yb.
\end{equation}
Likewise there are factorizations 
\begin{displaymath}
\Qb=\bar\Ab\bar\Ab^*\quad\text{and}\quad \Pb=\bar\Bb\bar\Bb^*,
\end{displaymath}
where $\bar\Ab$ and $\bar\Bb$ are banded upper-diagonal circulant matrices of order $n$. This yields a backward unilateral ARMA model
\begin{equation}
\label{unilateralARMAbar}
\sum_{k=-n}^0 \bar a_k y(t-k) = \sum_{k=-n}^0 \bar b_k \bar w(t-k).
\end{equation}

These representations are useful in the smoothing problem for periodic systems of Sect. \ref{Smoothing}.

\end{appendices}
\begin{IEEEbiography}
{Giorgio Picci} (S'67ÐM'70ÐSM'91ÐF'94ÐLF'08) received the Dr.Eng. degree from the University of Padua, Padua, Italy, in 1967.
Currently, he is Professor Emeritus with the Department of Information Engineering, University of Padua, Padua, Italy. He has held several long-term visiting appointments with various American, Japanese, and European universities among which Brown University, MIT, the University of Kentucky, Arizona State University, the Center for Mathematics and Computer Sciences (CWI) in Amsterdam, the Royal Institute of Technology, Stockholm, Sweden, Kyoto University, and Washington University, St. Louis, MO, USA. He has been contributing to systems and control mostly in the area of modeling, estimation, and identification of stochastic systems and published over 150 papers and written or edited several books in this area. He has been involved in various joint research projects with industry and state agencies. Besides being a life Fellow of IEEE, he 
is a Fellow of IFAC and a foreign member of the Swedish Royal Academy of Engineering Sciences.  .
\end{IEEEbiography}
\begin{IEEEbiography}
{Bin Zhu} received a Bachelor's degree from Xi'an Jiaotong University, Xi'an, China in 2012 and a Master's degree from Shanghai Jiao Tong University, Shanghai, China in 2015, both in control science and engineering. He is now a Ph.D. student at the Department of Information Engineering, University of Padova, Padova, Italy.
His current research interest includes system identification, modeling, signal processing and estimation.
\end{IEEEbiography}

\end{document}